\documentclass[11pt]{article}
\usepackage{times}
%


\usepackage{amsmath, amssymb, amsthm}
\usepackage{boxedminipage}
\usepackage{algorithmic}
\usepackage{algorithm}

\usepackage{fullpage}

\newenvironment{proofof}[1]{{\noindent \em Proof of #1.  }}{\hfill\qed}

\newcommand{\bR}{\mathbb{R}}

\def\argmax{\mathop{\rm argmax}}

\newtheorem{theorem}{Theorem}[section]

\newtheorem{lemma}[theorem]{Lemma}

\newcommand\ignore[1]{}

\theoremstyle{remark}
\newtheorem{remark}[theorem]{Remark}


\newcommand{\initOneLiners}{%
    \setlength{\itemsep}{0pt}
    \setlength{\parsep }{0pt}
    \setlength{\topsep }{0pt}
}

\theoremstyle{definition}
\newtheorem{definition}[theorem]{Definition}

\begin{document}

\title{Efficiently Learning from Revealed Preference}
%
%
\author{Morteza Zadimoghaddam\thanks{MIT, CSAIL, \texttt{morteza@csail.mit.edu}} \and Aaron Roth\thanks{University of Pennsylvania, \texttt{aaroth@cis.upenn.edu}.}
}
%
%
%

\maketitle              

%

\begin{abstract}
In this paper, we consider the revealed preferences problem from a learning perspective. Every day, a price vector and a budget is drawn from an unknown distribution, and a rational agent buys his most preferred bundle according to some unknown utility function, subject to the given prices and budget constraint. We wish not only to find a utility function which rationalizes a finite set of observations, but to produce a hypothesis valuation function which accurately predicts the behavior of the agent in the future. We give efficient algorithms with polynomial sample-complexity for agents with linear valuation functions, as well as for agents with linearly separable, concave valuation functions with bounded second derivative.
\end{abstract}

\section{Introduction}
Consider the problem of a market-researcher attempting to divine the preferences of a population of consumers merely by observing their past buying behavior. Suppose, for example, that the researcher may observe a consumer each day: every day, the consumer is faced with the choice to buy some subset of goods, each of which may have a different price. The consumer is facing an optimization problem -- each day he attempts to buy the subset of goods that maximizes his utility function, given his budget constraints. The market-researcher, on the other hand, is facing a learning problem. Based on his observations of the consumer, he would like to learn a model for the agent's utility function that can explain his behavior, and that can be used to predict (and therefore optimally exploit) his future behavior.

This is the ``revealed preferences'' problem, and it has received a great deal of attention in the economics literature (see, e.g., \cite{Var05} for a nice survey). Typically, however, the work on the revealed preferences problem has focused on determining whether a set of observations is \emph{rationalizable} or not -- i.e. whether it is consistent with \emph{any} utility function that is monotone increasing in each good. A classic result in this literature is Afriat's Theorem, which roughly states that any finite set of observations is rationalizable if and only if it is rationalizable by a monotone increasing, piecewise linear, concave utility function.

Note, however, that the problem of \emph{rationalizing} is easier than the problem of \emph{learning}. To rationalize a set of observations, it is sufficient to find a utility function which explains past behavior. Learning, however, requires finding a utility function which not only explains past behavior, but also will be \emph{predictive} of future behavior! In particular, Afriat's theorem can be taken as showing that attempting to learn from the set of all monotone increasing, piecewise linear, concave utility functions is as hard (and as hopeless) as learning from the set of all utility functions. Indeed, Beigman and Vohra \cite{BV06} have shown that this class of functions has infinite fat-shattering dimension, and so without further restricting the set of allowable utility functions, no accurate predictions can in general be made after any finite set of observations, even by inefficient learning algorithms!

In this paper, we initiate the study of \emph{efficiently} (in terms of both computational complexity and sample complexity) learning utility functions which can accurately predict \emph{future} purchases of a utility-maximizing agent, given access to past purchase behavior. We necessarily restrict the class of agent utility functions, and consider both linear utility functions, and linearly separable concave utility functions with bounded 2nd derivative. We give polynomial upper and lower bounds on the sample complexity (i.e. the number of observations) required for learning, as well as efficient algorithms that can learn predictive models from polynomially many observations.

\subsection{Our Results}
We consider a model in which an agent has an unknown utility function over a set of $n$ divisible goods. We get to observe the behavior of the agent, who every day faces a set of prices for each good, together with a budget constraint, which is drawn from a fixed but unknown probability distribution. The agent selects a bundle of goods to buy so as to maximize his utility function subject to his budget constraint, and the goal of a learning algorithm is to impute a model for his utility function that correctly predicts his behavior with high probability on future price/budget instances drawn from the same distribution.

We consider both linear utility functions, and then more generally, linearly separable concave utility functions with bounded derivatives. For both of these cases, we give efficient learning algorithms with polynomially bounded sample complexity. We then consider a relaxed model in which our algorithm receives expanded feedback from the agent during the learning stage, and is permitted to predict bundles that are within a small additive error of the agent's optimal bundle. In this relaxed model, we give a polynomial time learning algorithm with improved sample complexity bounds.
\subsection{Related Work}
Work on the ``revealed preferences problem'' has a long history in economics, beginning with the seminal work of Samuelson \cite{Sam38}. Modern work on revealed preferences, in which explanatory utility functions are constructively generated from finitely many agent price/purchase observations began with Afriat \cite{Afr65, Afr67} who showed (via an algorithmic construction) that any finite sequence of observations is rationalizable if and only if it is rationalizable by a piecewise linear, monotone, concave utility function. We will not attempt to review the extremely large body of work on revealed preferences, and instead refer the reader to an excellent survey of Varian \cite{Var05}.

Algorithms that constructively generate utility functions given a finite set of observations can be viewed as \emph{learning algorithms} for the set of all monotone increasing utility functions. These algorithms typically come with a caveat, however, that the hypothesis utility functions they generate have the same description length as the set of observations that they were generated from, and so tend to overfit the data -- this observation is related to a recent paper of Echenique, Golovin, and Wierman \cite{EGW11}, who gave a thought-provoking result: that any set of rationalizable observations can in fact be rationalized by a utility function which is computationally easy to optimize.   However, such a utility function clearly \emph{cannot} be predictive of the future behavior of an agent who is in fact making his decisions based on an intractable utility function, because the hypothesis produced by the learning algorithm would itself be witness to the existence of a polynomially sized circuit for optimizing the purportedly intractable utility function of the agent.

Most related to our work is the work of Beigman and Vohra \cite{BV06} who first pose the revealed preferences problem in the model of computational learning theory, with a distribution over observations and the explicit goal of producing a predictive hypothesis. They show that the set of all monotone utility functions has infinite fat-shattering dimension, and therefore prove that (without restricting the class of allowable utility functions), there does not exist any algorithm (independent of computational efficiency) which can provide any non-trivial predictive guarantees from any finite number of samples, over every distribution over observations. They also show that if the agent utility functions satisfy a certain bounded-jump condition, then the resulting class in fact has finite fat-shattering dimension, and that predictive learning is therefore possible using a finite number of samples. We continue this line of work by considering specific, simple classes of utility functions, and give efficient learning algorithms together with small polynomial upper and lower bounds on the sample complexity necessary for learning.

A very nice recent line of work by Balcan and Harvey, and Balcan et al. \cite{BH11, BCIW11} considers a related problem of learning valuation functions. This is similar in motivation, but is orthogonal to the revealed preference setting considered here because it uses direct access to the valuation function evaluated on bundles, rather than only the ``revealed'' preference of the user, which is the maximum value bundle selected subject to some cost constraint.
\section{Preliminaries}
We consider the \emph{revealed preferences problem} for an agent who when faced with a set of prices over $n$ goods $[n]$ buys the most valued bundle available to him.  A \emph{bundle} of goods is a vector of quantities $x \in [0,1]^n$, one for each good: $x_i$ represents the fraction of the good $i$ that is in the bundle. The goods are \emph{divisible}: i.e. bundles can be arbitrary real valued vectors $x \in [0,1]^n$.

The agent has a \emph{value function} $v:[0,1]^n\rightarrow \bR$. His value for a bundle $x \in [0,1]^n$ is simply $v(x)$. Goods can also be paired with vectors of non-negative \emph{prices} $p \in \bR_+^n$, where $p_i$ is the price for good $i$. The price of a bundle is linear in the goods in the bundle. The price of a bundle $x$ with respect to prices $p$ is therefore simply $ x\cdot p$. Prices are important, because the agent may be faced with a budget constraint $B$: he can only buy bundles $x$ such that $x\cdot p \leq B$.

The agent is a utility maximizer. When faced with a price vector $p$ and a budget $B$, he will choose to buy the bundle that maximizes his value subject to his budget constraint: That is, he will choose the bundle:
 $$x^*(v,p,B) = \argmax_{x \in [0,1]^n : x \cdot p \leq B}v(x)$$

 We  will consider several types of value functions in this paper. A \emph{linear} value function $v$ is defined by a vector $v \in \bR_+^n$, where $v_i$ is the marginal value of good $i$. In this case, $v(x) = v\cdot x$. More generally, we can consider linearly separable concave utility functions. A value function $v$ is linearly separable and concave if it can be described using concave functions $v_1,\ldots,v_n$ where each $v_i:[0,1]\rightarrow \bR_+$ is a one-dimensional real valued function, and we can evaluate $v(x) = \sum_{i=1}^nv_i(x_i)$.

The \emph{revealed preferences problem} is to recover a value function that can explain a sequence of choices that the agent was observed to make. In this paper, we wish to recover a value function that can not only rationalize observed behavior, but can help predict future behavior. In order for this to be a meaningful task, we must assume that the choices presented to the agent are drawn from some distribution.

\begin{definition}
An \emph{example} is a price vector $p \in \bR_+^n$ paired with a budget $B \in \bR_+$. A distribution over examples $\mathcal{D}$ is simply a distribution over $(p,B) \sim [0,1]^n\times \bR_+$.
\end{definition}

\begin{definition}
An observation of an agent with value function $v$,  $(p,B,x^*(p,B,v)) \in \bR_+^n\times\bR_+\times\bR_+^n$ is simply a triple consisting of a price vector $p$, a budget $B$, and a bundle $x^*(p,B,v)$ chosen by the agent given $p$ and $B$: i.e. a bundle $x$ that maximizes $v(x)$ subject to $x\cdot p \leq B$.
\end{definition}

\begin{definition}
An algorithm $A$ $\delta$-learns a class of value functions $\mathcal{V}$ from $m = m(\delta)$ observations if for every distribution $\mathcal{D}$ over examples and for every value function $v \in \mathcal{V}$, given a set of $m$ observations $\{(p_i,B_i,x^*(p_i,B_i,v))\}_{i=1}^m$ where examples $(p_i,B_i)$ are drawn i.i.d. from $\mathcal{D}$, with probability $1-\delta$ it produces a hypothesis $\hat{v}$ such that:
$$\Pr_{(p,B) \sim \mathcal{D}}[v(x^*(p,B,v)) = v(x^*(p,B,\hat{v}))] \geq 1-\delta.$$
We say that $A$ is \emph{efficient} if both its run-time and its sample complexity $m(\delta)$ are bounded by some polynomial $p(n,1/\delta)$. We say that the sample complexity of learning $\mathcal{V}$ is at most $m^* = m^*(\delta)$ if there is some algorithm $A$ which $\delta$-learns $\mathcal{V}$ from $m(\delta) \leq m^*(\delta)$ observations.
\end{definition}

\begin{remark}
Note that a learning algorithm must with high probability (over the choice of observations and coins of the mechanism) produce a value function which \emph{most of the time} (over draws of examples) selects a bundle which is equal to the bundle that the agent would have selected.
\end{remark}

In section \ref{sec:sampling} we relax our definition of learning to allow our learning algorithm to predict bundles which are only \emph{approximately optimal} to the agent, rather than requiring that it select the exactly correct bundle. Note that such approximately optimal bundles might look very different from exactly optimal bundles, and so we will also need to allow our learning algorithms to receive richer feedback from the agent.

\begin{definition}
An algorithm $A$ $(\epsilon, \delta)$-learns a class of value functions $\mathcal{V}$ from $m = m(\delta)$ observations if for every distribution $\mathcal{D}$ over examples and for every value function $v \in \mathcal{V}$, given a set of $m$ observations $\{(p_i,B_i,x^*(p_i,B_i,v))\}_{i=1}^m$ where examples $(p_i,B_i)$ are drawn i.i.d. from $\mathcal{D}$, with probability $1-\delta$ it produces a hypothesis $\hat{v}$ such that:
$$\Pr_{(p,B) \sim \mathcal{D}}[v(x^*(p,B,v)) \geq v(x^*(p,B,\hat{v})) - \epsilon] \geq 1-\delta.$$
For this notion of additive approximation to be meaningful, we will typically normalize the target utility function $v$ to lie in the range $[0,1]$.
\end{definition} 

\section{All Pairs Comparisons Algorithm: Learning Linear Valuation Functions}
\label{section:APCA}
In this section, we present an algorithm that efficiently $\delta$-learns the class of all linear valuation functions given a set of $m = O(\frac{n^2\ln(n^2/\delta)}{\delta})$ observations. In particular, this provides a quadratic upper bound on the optimal sample complexity $m^*(\delta)$ for learning linear valuation functions. We note that a linear $\Omega(m)$ lower bound is immediate in this setting. We start by characterizing the optimal bundle for an agent maximizing a linear utility function, and give intuition for our learning algorithm.

Let $v^*$ and $p$ denote some fixed value and price vectors respectively, and let $B$ denote some fixed budget. We denote the optimal bundle (according to the linear utility function defined by value vector $v^*$, price vector $p$, and budget $B$) by $x^*$. Recall that the value of the optimal bundle is $v^* \cdot x^*$, and its cost $p \cdot x^*$ is at most the budget $B$. Observe that in choosing bundle $x^*$, the agent is solving a divisible knapsack problem, and so the following structural lemma is immediate.

\begin{lemma}
For any pair of goods $i, j \in [n]$ with $x^*_i > x^*_j$, it must be that:
$$\frac{v^*_i}{p_i} \geq \frac{v^*_j}{p_j}$$
\end{lemma}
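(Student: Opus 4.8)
The plan is to prove this via a standard exchange (perturbation) argument, exploiting the fact that maximizing the linear objective $v^* \cdot x$ over the polytope $\{x \in [0,1]^n : x \cdot p \le B\}$ is precisely the fractional knapsack problem, whose optimal solution is governed by the greedy ``bang-per-buck'' ordering of the ratios $v^*_i/p_i$. Intuitively, an optimal agent never loads up on a good with a worse value-per-dollar while a strictly better good is still available to buy more of, and the lemma is just the contrapositive of this principle.

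First I would argue by contradiction. Suppose there is a pair $i, j$ with $x^*_i > x^*_j$ yet $v^*_i/p_i < v^*_j/p_j$. The idea is to shift a small amount of budget away from good $i$ (the worse value-per-dollar) and onto good $j$ (the better one), producing a feasible bundle of strictly greater value, contradicting the optimality of $x^*$.

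Concretely, for a small $\delta > 0$ I would form $x'$ from $x^*$ by setting $x'_i = x^*_i - \delta/p_i$ and $x'_j = x^*_j + \delta/p_j$, leaving every other coordinate unchanged. The net cost change is $p_i(x'_i - x^*_i) + p_j(x'_j - x^*_j) = -\delta + \delta = 0$, so $x'$ satisfies the same budget constraint $B$ as $x^*$. The value change is $v^*_i(x'_i - x^*_i) + v^*_j(x'_j - x^*_j) = \delta\,(v^*_j/p_j - v^*_i/p_i) > 0$ by the assumed strict inequality, so $v^* \cdot x' > v^* \cdot x^*$.

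The one point that genuinely needs verification — and what I would flag as the main, though minor, obstacle — is that $x'$ remains in $[0,1]^n$. This is exactly where the hypothesis $x^*_i > x^*_j$ earns its keep: combined with $0 \le x^*_j$ and $x^*_i \le 1$, it forces $x^*_i > 0$ (so $x'_i$ stays nonnegative for small $\delta$) and $x^*_j < 1$ (so $x'_j$ stays at most $1$ for small $\delta$). Hence for sufficiently small $\delta$ both box constraints are preserved, $x'$ is feasible and strictly better, yielding the contradiction. I would also note the mild convention needed when a price can vanish: if $p_i = 0$ the ratio is read as $+\infty$ and the claimed inequality holds trivially, so it suffices to carry out the perturbation above under strictly positive prices.
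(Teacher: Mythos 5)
Your proof is correct: the exchange argument you give is precisely the standard fractional-knapsack reasoning that the paper invokes when it calls this lemma ``immediate'' (the paper provides no explicit proof, only the observation that the agent solves a divisible knapsack problem). Your handling of the box constraints via $x^*_i > x^*_j$ forcing $x^*_i > 0$ and $x^*_j < 1$, and the zero-price convention, correctly fills in the details the paper leaves implicit.
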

Equivalently, for any pair of goods with $\frac{v^*_i}{v^*_j} \geq \frac{p_i}{p_j}$, the optimal bundle ``prefers'' good $i$ over good $j$ (It will never buy any of good $j$ until it has exhausted the supply of good $i$). Our algorithm is based on this structural characterization, and operates by maintaining upper and lower bounds on each of the $n^2$ ratios $\frac{v^*_i}{v^*_j}$ for $i\neq j \in [n]$. Based on this transitive relation, we can sort the goods, and find the optimal bundle by buying the goods one by one starting from high priority goods until the budget $B$ is spent completely. In this optimal bundle, we have at most one fractional item.
In our algorithm, we try to learn ratios $\frac{v_i}{v_j}$ accurately for all pair of goods with high probability.

\begin{figure}[h]
\begin{boxedminipage}{\textwidth}
\noindent \textbf{AllPairsLearn}$(\delta)$:
\newline\textbf{Training Phase:}
\begin{enumerate}
\item \textbf{Let} $E$ be a set of $m = O\left(\frac{n^2\ln(n^2/\delta)}{\delta}\right)$ observations  $(p, B, x^*(p, B, v))$.
\item \textbf{Initialize} bounds $(L_{i,j}, U_{i,j})$ for each $i \neq j \in [n]$. Initially $L_{i,j} = 0$ and $U_{i,j} = \infty$ for all $i,j$.
\item \textbf{For} each $(p, B, x^*) \in E$:
    \begin{enumerate}
    \item \textbf{For} each $i \neq j \in [n]$:
        \begin{enumerate}
        \item \textbf{If} $x_i^* > x_j^*$, \textbf{Let} $L_{i,j} = \max(L_{i,j}, \frac{p_i}{p_j})$
        \item \textbf{If} $x_j^* > x_i^*$, \textbf{Let} $U_{i,j} = \min(U_{i,j}, \frac{p_i}{p_j})$
        \end{enumerate}
    \end{enumerate}
\end{enumerate}
\textbf{Classification Phase:}
\begin{enumerate}
\item On a new example $(p, B)$ let $v' \in [0,1]^n$ be any vector such that for all $i\neq j \in [n]$ $\frac{v'_i}{v'_j} \in [L_{i,j}, U_{i,j}]$. Predict bundle $x'(p, B, v')$ that results from maximizing $v'$ with respect to prices $p$ and budget constraint $B$.
\end{enumerate}
\end{boxedminipage}
\caption{The All Pairs Comparison Algorithm for Learning Linear Valuation Functions. It takes as input an accuracy parameter $\delta$.}
\end{figure}


The intuition is that in order to find the optimal bundle $x^*$, we need only know bounds on the ratios of the values of pairs of goods for which unequal quantities are purchased in the optimal bundle. So if we know that $\frac{v_i}{v_j} \geq \frac{p_i}{p_j}$ for any pair of goods with $x^*_i > x^*_j$, we can find the optimal bundle $x^*$. We need not know the values themselves -- it is sufficient to bound these ratios. For example, if the lower bound $L_{i,j}$ is at least $\frac{p_i}{p_j}$, we can infer that good $i$ is preferred to good $j$. If we can infer all these preferences for pairs of goods $(i,j)$ with $x^*_i \neq x^*_j$, we can find the optimal bundle as well. Following we show that with high probability after observing $m=O(n^2\ln(n^2/\delta)/\delta)$ i.i.d. examples we can find the optimal bundle.

\begin{theorem}\label{lemma:allpairsratios}
AllPairsLearn$(\delta)$ efficiently $\delta$-learns the class of linear valuation functions given $m = O\left(\frac{n^2\ln(n^2/\delta)}{\delta}\right)$ observations.
\end{theorem}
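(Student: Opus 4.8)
The plan is to split the argument into a \emph{structural} reduction, which reduces a prediction error on a fresh example to the event that this example would have tightened one of the $O(n^2)$ maintained bounds, and a \emph{probabilistic} step, which shows that after $m$ observations each such ``tightening probability'' is small with high probability. Efficiency is essentially immediate: the training phase maintains $2n(n-1)$ bounds and spends $O(n^2)$ time per observation, while the classification phase need only find a vector $v'$ satisfying the $O(n^2)$ constraints $L_{i,j}\le v'_i/v'_j\le U_{i,j}$ (a feasibility problem that is linear in the $\log v'_i$) and then run the greedy divisible-knapsack rule, all polynomial in $n$ and $m$. The first thing to record is that the bounds are always \emph{valid}: by the structural lemma every update preserves the invariant $v^*_i/v^*_j\in[L_{i,j},U_{i,j}]$, so the true (normalized) ratios always lie in the intervals; in particular the feasible set in the classification phase is nonempty and any consistent $v'$ obeys $v'_i/v'_j\ge L_{i,j}$ and $v'_i/v'_j\le U_{i,j}$.

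For the structural reduction, fix a fresh example $(p,B)$ and suppose the predicted bundle $x'$ satisfies $v^*\cdot x' < v^*\cdot x^*$. I would prove the contrapositive: assume that for \emph{every} pair, $x^*_i>x^*_j$ implies $p_i/p_j\le L_{i,j}$ and $x^*_j>x^*_i$ implies $p_i/p_j\ge U_{i,j}$ (i.e.\ the example tightens no bound), and deduce $v^*\cdot x'=v^*\cdot x^*$. Indeed, for any pair with $x^*_i>x^*_j$ we then have $v'_i/v'_j\ge L_{i,j}\ge p_i/p_j$, so $v'$ weakly prefers $i$ to $j$; hence the full/fractional/empty partition of the goods induced by $x^*$ under the $v^*$-greedy order is consistent with a $v'$-greedy order, which makes $x^*$ itself a $v'$-optimal bundle, so the predicted bundle achieves the same optimal $v'$-value (and, modulo tie-breaking at the budget frontier, the same $v^*$-value). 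Contrapositively, a prediction error forces the fresh example to tighten some $L_{i,j}$ (through a pair with $x^*_i>x^*_j$ and $p_i/p_j>L_{i,j}$) or some $U_{i,j}$ (through a pair with $x^*_j>x^*_i$ and $p_i/p_j<U_{i,j}$).

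The probabilistic step is then a one-sided threshold estimate applied to each of the at most $2n^2$ bounds. Fix $L_{i,j}$ and let $q^L_{i,j}$ be the probability, over a fresh draw, that the draw tightens it. Since $L_{i,j}$ is the maximum of $p_i/p_j$ over the training samples with $x^*_i>x^*_j$, choosing the threshold $t_\gamma$ so that the ``$x^*_i>x^*_j$, $p_i/p_j\ge t_\gamma$'' region has mass exactly $\gamma$, the event $q^L_{i,j}>\gamma$ forces all $m$ training samples to miss that region, so $\Pr_{\mathrm{train}}[q^L_{i,j}>\gamma]\le(1-\gamma)^m\le e^{-\gamma m}$, and symmetrically for $U_{i,j}$. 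Setting $\gamma=\delta/(2n^2)$ and union-bounding over all $2n^2$ bounds, with probability at least $1-2n^2 e^{-\gamma m}$ over the training set every bound has tightening probability at most $\gamma$; on that event, the fresh-example error probability is at most $\sum_{\text{bounds}} q \le 2n^2\gamma=\delta$ by the reduction. Imposing $2n^2 e^{-\delta m/(2n^2)}\le\delta$ yields $m\ge\frac{2n^2}{\delta}\ln\frac{2n^2}{\delta}$, i.e.\ $m=O\!\left(\frac{n^2\ln(n^2/\delta)}{\delta}\right)$, exactly as claimed; note that it is this threshold argument, rather than a crude Markov bound on $\sum q$, that is responsible for the favorable $1/\delta$ (instead of $1/\delta^2$) dependence.

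The main obstacle is the tie-breaking inside the structural reduction: weak preference of $v'$ suffices to exhibit $x^*$ as \emph{one} $v'$-optimal bundle, but a different $v'$-optimal bundle arising from an exact tie $v'_i/p_i=v'_j/p_j$ could carry a strictly smaller $v^*$-value. I would handle this by noting that such a tie forces $p_i/p_j=L_{i,j}$, a boundary coincidence, and dispose of it either through a consistent tie-breaking rule that reproduces $x^*$, a genericity assumption on prices, or by folding the boundary event into the tightening event (replacing the strict inequalities with non-strict ones); none of these alters the asymptotic sample complexity.
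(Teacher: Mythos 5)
Your proposal is correct and follows essentially the same route as the paper's proof: maintain the $O(n^2)$ pairwise ratio bounds, apply a per-pair one-sided quantile/threshold argument giving failure probability $(1-\gamma)^m$ with $\gamma=\Theta(\delta/n^2)$, union bound over pairs, and then argue that when no bound would be tightened the predicted bundle matches the agent's. The only substantive difference is expository: you make the ``error implies tightening'' reduction and the tie-breaking issue explicit, whereas the paper handles ties implicitly by conditioning on the strict event $p_i/p_j\notin[L_{i,j},U_{i,j}]$ (which, since the true ratio lies in the interval, yields strict preferences under $v'$) --- this is precisely your option of folding the boundary case into the tightening event.
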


\begin{proof}
For each pair of goods $(i,j)$, we define $a_{i,j}$ and $b_{i,j}$ as follows:

\begin{eqnarray*}
a_{i,j} &=& \min \left\{ a | a \leq \frac{v_i}{v_j} ~~~~ \& ~~~~ Pr\left( x^*_i > x^*_j ~~~~ \& ~~~~ \frac{p_i}{p_j} \in [a,\frac{v_i}{v_j}] \right) \leq \frac{\delta}{n^2} \right\} \\
b_{i,j} &=& \max \left\{ b | b \geq \frac{v_i}{v_j} ~~~~ \& ~~~~ Pr\left( x^*_j > x^*_i ~~~~ \& ~~~~ \frac{p_i}{p_j} \in [\frac{v_i}{v_j},b] \right) \leq \frac{\delta}{n^2} \right\} \\
\end{eqnarray*}

where $p$ is the price vector drawn from the distribution $\mathcal{D}$, and $x^*$ is its optimal bundle. Every time an i.i.d. example is drawn, with probability $\delta/n^2$, the lower bound $L_{i,j}$ becomes at least $a_{i,j}$, and the upper bound $U_{i,j}$ becomes at most $b_{i,j}$ for every pair $(i,j)$. For each pair $(i,j)$ after $m$ observations, $L_{i,j}$ is less than $a_{i,j}$ with probability at most $(1-\delta/n^2)^m \leq e^{-\ln(n^2/\delta)} \leq \delta/n^2$. A similar argument holds for $U_{i,j}$.  Using union bound, we can have that with probability $1-\delta$, every $L_{i,j}$ is at least $a_{i,j}$, and every $U_{i,j}$ is at most $b_{i,j}$.

Now when a new example $(p',B',x'(p',B',v))$ arrives ($x'$ is the optimal bundle), the probability that $x'_i \neq x'_j$ and we can not imply which of these two items are preferred over the other one, i.e. $\frac{p_i}{p_j} \in [L_{i,j}, U_{i,j}]$ is at most $2\delta/n^2$, because we know that $[L_{i,j},U_{i,j}] \subseteq [a_{i,j}, b_{i,j}]$. Using union bound, with probability $1-\delta$ we can derive all preference relations for items with unequal fractions in the optimal bundle $x'$. In the other words, with probability $1-\delta$,  we can find the optimal bundle $x'$.

\end{proof}

\section{Learning Linearly Separable Concave Utility Function}
In this section, we modify the algorithm presented in section~\ref{section:APCA} to learn the class of linearly separable concave utility functions. Recall that agents with linearly separable utility functions have a separate function $v_i:[0,1] \rightarrow \bR_+$ for each $1 \leq i \leq n$, and their utility for bundle $x$ is $\sum_{i=1}^n v_i(x_i)$.
We assume that each utility function $v_i$ is a concave function with bounded second derivative. Concavity corresponds to a decreasing marginal utility condition: that buying an additional $\epsilon$ fraction of item $i$ increases agent  utility more when we have less of item $i$: $v_i(a+\epsilon) - v_i(a) \geq v_i(b+\epsilon) - v_i(b)$ for any $a \leq b$. Our bounded second derivative assumption states that the second derivative of each utility function has some supremum strictly less than $\infty$.

We first characterize optimal bundles, and then adapt our learning algorithm for linear valuation functions to apply to the class of linearly separable concave utility functions.

Fix a utility function $v^* = \{v^*_i:[0,1]\rightarrow \bR^+\}$ and a price/budget pair $(p,B)$. The corresponding optimal bundle can be characterized as follows. For any threshold $\tau \geq 0$, define $x^{\tau}_i$ to be $Max\{f | f \in [0,1] \& \frac{v'_i(f)}{p_i} \geq \tau \}$ where $v'_i(f)$ is the first derivative of function $v_i$ at point $f$. We can now define $p^{\tau}$ to be $\sum_{i=1}^n p_ix^{\tau}_i$. We will show that the optimal bundle $x^*$ for $v^*$ in the face of price/budget pair $(p,B)$ is the vector such that $x^*_i = x^{\tau}_i$ for each $1 \leq i \leq n$ where $\tau$ is the maximum value such that this bundle does not exceed the budget constraint.
The following lemma is proved in Appendix~\ref{Appendix:OmittedProofs}.

\begin{lemma}\label{Lemma:OptimalBundleAdditivelySeparable}
The optimal bundle $x^*$ for pair $(p,B)$ is equal to $x^{\tau}$ where $\tau$ is $Max\{\tau | p^{\tau} \leq B \}$.
\end{lemma}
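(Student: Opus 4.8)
The plan is to prove the lemma by establishing that the optimal bundle is characterized by the standard first-order (KKT) conditions for maximizing a concave objective subject to a budget and box constraints, and then to identify the multiplier on the budget constraint with the threshold $\tau$. Since each $v_i$ is concave, the objective $v(x)=\sum_i v_i(x_i)$ is concave, and the feasible region $\{x\in[0,1]^n : x\cdot p\le B\}$ is convex and compact; hence an optimum exists and, crucially, any feasible bundle satisfying the first-order optimality conditions is a global maximizer. This reduces the lemma to three steps: write down the first-order conditions, check that $x^\tau$ satisfies them for the appropriate $\tau$, and argue that such a $\tau$ is well-defined.

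First I would derive the first-order conditions by a local exchange argument, mirroring the linear case. Fix the optimal bundle $x^*$. For any two goods $i,j$ that are purchased fractionally ($0<x^*_i,x^*_j<1$), transferring an infinitesimal amount of budget from $j$ to $i$ changes the utility at first order by a quantity proportional to $v'_i(x^*_i)/p_i - v'_j(x^*_j)/p_j$; optimality forces this to vanish, so all interior goods share a common bang-per-buck, which I call $\tau$. The same perturbation argument at the boundaries yields the one-sided conditions $v'_i(0)/p_i\le\tau$ whenever $x^*_i=0$ and $v'_i(1)/p_i\ge\tau$ whenever $x^*_i=1$. Because $v'_i$ is non-increasing (concavity), these three conditions say precisely that $x^*_i=\max\{f : v'_i(f)/p_i\ge\tau\}=x^\tau_i$, i.e.\ $x^*=x^\tau$.

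It remains to pin down $\tau$ from the budget. Here I would use that $p^\tau=\sum_i p_i x^\tau_i$ is non-increasing in $\tau$ (each $x^\tau_i$ is, again by concavity), decreasing from $\sum_i p_i x^0_i$ down to $0$ as $\tau$ grows. If the budget is slack at the unconstrained optimum then $\tau=0$; otherwise complementary slackness forces the budget to bind, and monotonicity of $p^\tau$ guarantees a well-defined threshold at which $p^\tau$ meets $B$. (In the statement this is the threshold at which the budget becomes tight; I would double-check the extremal direction, since with $p^\tau$ non-increasing the set $\{\tau : p^\tau\le B\}$ is an up-set, so the relevant value is its \emph{smallest}, budget-binding element.) Finally, by concavity the first-order conditions are sufficient, so the threshold bundle $x^\tau$ is globally optimal, which completes the argument.

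The main obstacle, I expect, will be the boundary and degeneracy bookkeeping rather than any deep idea: handling goods pinned at $0$ or $1$ via the one-sided inequalities, and coping with the possibility that some $v'_i$ is locally constant (a linear piece), which can make $p^\tau$ jump and hence fail either to hit $B$ exactly or to determine $\tau$ uniquely. The $\max\{f : \cdots\}$ convention in the definition of $x^\tau$ is exactly what resolves the ties---at a threshold where $v'_i/p_i$ is flat it selects the largest admissible fraction---so I would lean on it, together with the assumed bounded second derivative (which makes each $v'_i$ Lipschitz and hence $p^\tau$ continuous), to conclude that the budget-binding threshold yields the genuinely optimal bundle.
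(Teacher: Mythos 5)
Your proposal follows essentially the same route as the paper's proof: that proof is exactly the local exchange argument you describe, split into three cases (both goods interior, one good at $1$, one good at $0$), giving a common bang-per-buck value for interior goods and one-sided inequalities at the boundaries, after which $\tau$ is taken to be that common value (or, when no good is interior, any value in $[\max_{i:x^*_i=0} v'_i(x^*_i)/p_i,\; \min_{i:x^*_i=1} v'_i(x^*_i)/p_i]$). You are actually more thorough than the paper in two respects. First, the paper stops at ``setting $\tau = v'_i(x^*_i)/p_i$ proves the claim'': it never verifies that $x^*_i = x^\tau_i$ coordinate-wise, nor that this $\tau$ satisfies the extremal budget characterization in the statement; your monotonicity-of-$p^\tau$ and complementary-slackness step, plus the appeal to sufficiency of first-order conditions under concavity, fills exactly that hole. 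Second, your flag about the extremal direction is a genuine catch: since $x^\tau_i$, and hence $p^\tau$, is non-increasing in $\tau$, the set $\{\tau : p^\tau \le B\}$ is upward closed, so the ``Max'' in the statement should be a ``Min'' (the intended object is the largest affordable bundle of the form $x^\tau$); the paper's proof implicitly picks the right value but never reconciles it with the stated formula.

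One caveat on your closing step: the claim that bounded second derivative makes $p^\tau$ continuous in $\tau$ is false. Lipschitz continuity of $v'_i$ does not prevent $v'_i$ from being constant on an interval, and wherever $v'_i/p_i$ is flat at level $\tau$, the quantity $x^\tau_i$ (hence $p^\tau$) jumps as $\tau$ crosses that level; the $\max\{f:\cdots\}$ tie-breaking convention does not repair this. Indeed, with two identical linear goods, $v_1(f)=v_2(f)=f$, $p_1=p_2=1$, $B=1$, the bundle $p^\tau$ jumps from $2$ to $0$ and no $x^\tau$ is optimal, so in such degenerate cases the lemma as literally stated fails. Your argument, like the paper's, is really a proof for the case where $p^\tau$ attains $B$ (e.g., strictly decreasing marginals); since the paper's own proof ignores this case entirely, this does not put you behind it, but you should not lean on the continuity claim as if it closed the gap.
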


The intuition for our algorithm now follows from the linear utility case.
From each observation consisting of an example and its optimal bundle, we may infer some constraints on the derivatives of utility functions at various points. Just as in the linear utility case, these are the only pieces of information we need to infer the optimal bundle.

\begin{figure}[h]
\begin{boxedminipage}{\textwidth}
\noindent \textbf{LinearSeparableLearn}$(\epsilon,\delta)$:
\newline\textbf{Training Phase:}
\begin{enumerate}
\item \textbf{Let} $E$ be a set of $m = O\left(\frac{(n(k+2))^2\ln((n(k+2))^2/\delta)}{\delta}\right)$ observations  $(p, B, x^*(p, B, v))$.
\item \textbf{Initialize} bounds $(L(i,r,j,s), U(i,r,j,s))$ for each $i \neq j \in [n]$ and $r, s \in [k]$ defined in Definition~\ref{Def:AddSep}. Initially $L(i,r,j,s) = 0$ and $U(i,r,j,s) = \infty$. 
\item \textbf{For} each $(p, B, x^*) \in E$:
    \begin{enumerate}
    \item \textbf{For} each $i \neq j \in [n]$:
        \begin{enumerate}
        \item \textbf{If} $x_i^* > x_j^*$, \textbf{Let} $L(i, \lfloor kx^*_i \rfloor,j,\lceil kx^*_j \rceil) = \max(L(i, \lfloor kx^*_i \rfloor,j,\lceil kx^*_j \rceil), \frac{p_i}{p_j})$
        \item \textbf{If} $x_i^* > x_j^*$, \textbf{Let} $U(i, \lceil kx^*_i \rceil,j,\lfloor kx^*_j \rfloor) = \min(U(i, \lceil kx^*_i \rceil,j,\lfloor kx^*_j \rfloor), \frac{p_i}{p_j})$
    \end{enumerate}
\end{enumerate}
\end{enumerate}
\textbf{Classification Phase:}
\begin{enumerate}
\item  On a new example $(p, B)$ find thresholds $\{l_i\}_{i=1}^n$ such that $\frac{v'_i(l_i/k)}{v'_j((l_j+1)/k)} \geq \frac{p_i}{p_j}$ for each pair $i,j \in [n]$, and $\sum_{i=1}^n \frac{p_iMax\{l_i,0\}}{k} \leq B \leq  \sum_{i=1}^n \frac{p_iMin\{l_i+1,k\} }{k}$. Buy $l_i/k$ fraction of object $i$ for every $i \in [n]$, and spend the remaining budget to buy equal fraction of all objects.
\end{enumerate}
\end{boxedminipage}
\caption{The Learning Algorithm for Linearly Separable Valuation Functions. It takes as input an accuracy parameter $\delta$, and an error parameter $\epsilon$.}\end{figure}

Unlike the linear utility setting, however, it is not possible to maintain bounds on all ratios of derivatives of utility functions at all relevant points,
because there are a continuum of points and the derivatives may take a distinct value at each point.
Instead, we discretize the interval $[0,1]$ with $k+1$ equally distanced points $0, 1/k, 2/k, \cdots, 1$ for some positive integer value of $k$,
and maintain bounds on the ratios of the derivatives at these points.

\begin{definition}\label{Def:AddSep}
We let $k$ to be an integer at least $\big{[} (2Q/\epsilon) \cdot \max_{(p,B) \sim \mathcal{D}, 1 \leq j \leq n} \{\frac{B}{p_j} \} \big{]}$
where $Q$ is an upper bound on $v_i''(x)$ over all $i$ and $x \in [0,1]$, and $\epsilon$ is the error with which we are happy learning to.
We define $V(i,l) = v'_i(l/k)$ for item $i$, $1 \leq i \leq n$ and discretization step $l$, $0 \leq l \leq k$.
For convenience, we define $V(i,k+1) = 0$. For any pairs $1 \leq i, j \leq n$, and $0\ \leq r, s \leq l$, we define $L(i,r,j,s)$ and $U(i,r,j,s)$ to be the lower and upper bounds
on the ratio $\frac{V(i,r)}{V(j,s)}$. The lower and upper bounds are intialized to zero and $\infty$ respectively.
\end{definition}

Analogously to the linear case, our algorithm will maintain upper and lower bounds on the pairwise ratios between each of these these $n(k+2)$ variables.
Since the utilities are concave, we will also maintain the constraint that $V(i,l) \leq V(i,l-1)$ for any $1 \leq i \leq n$ and $1 \leq l \leq k+1$ throughout the course of the algorithm.

In the training phase, the algorithm selects $m=O((n(k+2))^2 log((n(k+2))^2/\delta) /\delta)$ observations. Note the similarity in the number of examples here as compared to the linear case: this is no coincidence. Instead of maintaining bounds on the pairwise ratios of $n$ derivatives we are maintaining bounds on the pairwise ratios between  $n(k+2)$ derivatives.

Consider the inequalities we can infer from each observation $(p, B, x^*)$. By our optimality characterization, we know that for any pair of items $i$ and $j$ with $x^*_i > 0$ and  $x^*_j < 1$, we must have: $\frac{v'_i(x^*_i)}{p_i} \geq  \frac{v'_j(x^*_j)}{p_j}$. We therefore can obtain the following inequality:

$$
\frac{V(i,\lfloor kx^*_i \rfloor)}{p_i} \geq \frac{v'_i(x^*_i)}{p_i} \geq  \frac{v'_j(x^*_j)}{p_j} \geq \frac{V(j,\lceil kx^*_j \rceil )}{p_j}
$$

The above inequality defines the update step that we can impose on the lower bound $L(i,l',j,l'')$ and upper bound $U(i,l',j,l'')$ on the ratios $\frac{V(i,l')}{V(j,l'')}$ where $l'=\lfloor kx^*_i \rfloor$, and $l''=\lceil kx^*_j \rceil$ , analogously to our algorithms update for the linear case. For each example, we update these bounds appropriately.

After the training phase completes, our algorithm uses these bounds to predict a bundle for a new example $(p, B)$. The algorithm attempts to find some threshold $-1 \leq l_i \leq k$ for each item $i$
such that the following two  properties hold.  We define $V(i,-1)= \infty$ for each $1 \leq i \leq n$.
\begin{itemize}
\item
For each pair of items $i\neq j \in [n]$, upper and lower bounds imply that $\frac{V(i,l_i)}{p_i} \geq \frac{V(j,l_j+1)}{p_j}$.
\item
We have that: $\sum_{i=1}^n \frac{p_iMax\{l_i,0\}}{k} \leq B \leq  \sum_{i=1}^n \frac{p_iMin\{l_i+1,k\} }{k}$. In other words, there is enough budget to buy $\max\{l_i,0\}/k$ fraction of object $i$ for all $1 \leq i \leq n$, and the total cost of  buying $\min\{l_i+1,k\}/k$ fraction of each item $i$ is at least $B$.
\end{itemize}

After finding these thresholds $l_1, l_2, \cdots, l_n$, our algorithm selects a bundle that contains $\max\{l_i,0\}/k$ units of item $i$ for each $i$, and then spend the rest of the budget (if there is any remaining) to buy an equal fraction of all objects with $0 \leq l_i < k$, i.e. if $B'$ of the budget remains after the first step, we buy $\frac{B'}{\sum_{1 \leq i \leq n, \ 0 \leq l_i < k} p_i}$ units of each object $i$ with  $0 \leq l_i < k$. Intuitively, the objects with $l_i =0 $, represent very expensive objects (in comparison to their values) which we prefer not to buy at all. On the other hand, we have already exhausted the supply of objects with $l_i=1$.

In the rest of this section, we show in Lemma~\ref{Lemma:FindThresholds} (proved in Appendix~\ref{Appendix:OmittedProofs})
how to find these thresholds (the sequence $l_i$ for $1 \leq i \leq n$) based on the learned upper and lower bounds on ratios if such thresholds exist.
Then, we prove in Lemma\ref{lemma:thresholdswhp} that after training on $m$ examples, with high probability (at least $1-2\delta$), this sequence of thresholds indeed exists. Finally we conclude that our algorithm is an $(\epsilon,\delta)$-learner.

\begin{lemma}\label{Lemma:FindThresholds}
Assuming there exists a sequence of thresholds $\{l_i\}_{i=1}^n$ with the two desired properties in our algorithm, there exists a polynomial time algorithm to find them.
\end{lemma}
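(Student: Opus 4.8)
The plan is to reformulate the two properties in the language of discretized greedy knapsack and then reduce the coupled pairwise constraints to comparisons against a single ``pivot''. For each good $i$ and level $l$ define the marginal value-per-price token $m_i(l) = V(i,l)/p_i$, with the conventions $m_i(-1) = \infty$ and $m_i(k+1) = 0$. By concavity the sequence $m_i(0) \ge m_i(1) \ge \cdots \ge m_i(k+1)$ is non-increasing, and the first property of the algorithm is exactly the statement that the cut $(l_i)$ separates kept from rejected marginals consistently, i.e.\ $\min_i m_i(l_i) \ge \max_j m_j(l_j+1)$, where each such inequality must be \emph{certified} by the learned bounds (the lower bound $L(i,l_i,j,l_j+1) \ge p_i/p_j$ witnesses $m_i(l_i) \ge m_j(l_j+1)$). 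This is precisely the discrete analogue of the single-threshold greedy rule of Lemma~\ref{Lemma:OptimalBundleAdditivelySeparable}. The obstacle to directly sorting the tokens is that we do not know the values $V(i,l)$, only interval bounds; but the key observation is that if there is a common threshold $\tau$ with $m_i(l_i) \ge \tau \ge m_j(l_j+1)$ for all $i,j$, then every pairwise inequality follows by transitivity, so it suffices to compare each good against one fixed pivot token realizing $\tau$.

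Concretely, I would first preprocess the learned bounds into their monotone and transitive closure: since the true ratio $V(i,r)/V(j,s)$ is non-increasing in $r$ and non-decreasing in $s$, any valid lower bound at $(r',s')$ with $r'\ge r,\ s'\le s$ is also a valid lower bound at $(r,s)$, and symmetrically for upper bounds; taking this closure (together with the maintained concavity constraints $V(i,l)\le V(i,l-1)$) yields bounds that are monotone in each level and closed under chaining, which is exactly what the word ``imply'' in the property calls for. Then I would \textbf{enumerate the pivot} $(i^*,l^*)$ over all $n(k+1)$ choices with $0 \le l^* \le k$. For a fixed pivot, for each good $j$ I compute the set of levels $l_j$ that are certifiably consistent with threshold $\tau = m_{i^*}(l^*)$: the ``keep'' condition $L(j,l_j,i^*,l^*) \ge p_j/p_{i^*}$ holds on a downward-closed set $\{l_j \le a_j\}$ (by monotonicity of the closed lower bound in $l_j$), and the ``reject'' condition $U(j,l_j+1,i^*,l^*) \le p_j/p_{i^*}$ holds on an upward-closed set $\{l_j \ge b_j\}$; hence the feasible levels for good $j$ form an interval $[b_j,a_j]$, found by a single scan. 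If every interval is nonempty, I then try to meet the budget sandwich $\sum_i \frac{p_i \max\{l_i,0\}}{k} \le B \le \sum_i \frac{p_i \min\{l_i+1,k\}}{k}$ by a water-filling within the product box $\prod_j [b_j,a_j]$: both sums are coordinatewise increasing in the $l_i$, so starting from all $l_i = b_i$ and raising coordinates one step at a time monotonically increases both sums, and I stop at the first configuration whose upper sum reaches $B$, checking that its lower sum has not exceeded $B$. Each of these steps is polynomial in $n$ and $k$, and the pivot loop multiplies the cost by only $n(k+1)$.

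For correctness I would argue that whenever a valid sequence $(l_i)$ exists, the procedure outputs one. Let $\tau = \min_i m_i(l_i)$, attained by some good $i^*$ with $l_{i^*} = l^*$ (at least one good buys a positive level unless the instance is the trivial $B=0$ case, which is handled directly). When the pivot loop reaches this $(i^*,l^*)$, property~1 of the true solution guarantees, via the transitivity reduction, that each $l_j$ lies in its interval $[b_j,a_j]$, so the box is nonempty and contains the budget-feasible point $(l_j)$; the monotone water-filling therefore succeeds and returns a point of the box satisfying the sandwich, and chaining the pivot comparisons re-certifies property~1 for the returned levels. The steps I expect to be the main obstacle are exactly the two structural claims that make the search efficient: (i) justifying the reduction of the quadratically many pairwise constraints to one-against-pivot comparisons through a common threshold (which requires the transitive/monotone closure of the bounds so that the per-good feasible level sets are genuine intervals), and (ii) showing that the budget sandwich can always be achieved inside the product box by a monotone adjustment whenever the correct pivot is guessed. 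The remaining bookkeeping, including the boundary conventions $V(i,-1)=\infty$ and $V(i,k+1)=0$ and the special cases $l_i \in \{-1,k\}$, is routine.
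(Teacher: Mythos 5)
Your outline --- enumerate a pivot $(i^*,l^*)$, reduce the pairwise requirement to one-against-pivot comparisons by transitivity, extract a per-good feasible interval of levels, then monotonically water-fill to hit the budget sandwich --- is exactly the architecture of the paper's own proof, and your water-filling/budget analysis is sound. The genuine gap is in the transitivity reduction itself. Property~1 of the hypothesized thresholds supplies certificates only of the \emph{kept-versus-rejected} form: for each ordered pair $(i,j)$ the bounds imply $V(i,l_i)/p_i \ge V(j,l_j+1)/p_j$. But your pivot is the \emph{kept} token of $i^*$ (you set $\tau = \min_i V(i,l_i)/p_i = V(i^*,l^*)/p_{i^*}$ with $l^*=l_{i^*}$), so your keep condition $L(j,l_j,i^*,l^*)\ge p_j/p_{i^*}$ demands a certified comparison of two \emph{kept} tokens, $V(j,l_j)/p_j \ge V(i^*,l_{i^*})/p_{i^*}$. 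Nothing in the hypothesis produces such a certificate, and your monotone/transitive closure cannot manufacture it, because the concavity inequality $V(i^*,l_{i^*}) \ge V(i^*,l_{i^*}+1)$ chains in the wrong direction. Concretely, take two goods where the closure of the learned bounds certifies exactly $V(1,l_1)/p_1 \ge V(2,l_2+1)/p_2$ and $V(2,l_2)/p_2 \ge V(1,l_1+1)/p_1$, plus within-good concavity: property~1 holds for $(l_1,l_2)$, yet no pivot choice works for you --- pivots at kept tokens fail your keep condition for the other good (kept-vs-kept is uncertified), pivots at rejected tokens fail your reject condition (rejected-vs-rejected is uncertified), and pivots at other levels give an empty interval for one of the goods. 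So the box need not contain the true thresholds (it can be empty), the claim that ``property~1 guarantees each $l_j$ lies in $[b_j,a_j]$'' is false, and your algorithm can report failure while valid thresholds exist.

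This is precisely the difficulty the paper's proof is engineered around, and the repair is asymmetric rather than cosmetic: the paper pivots on the maximal \emph{rejected} token $V(i,l_i+1)/p_i$, so that every keep certificate it needs is of the kept-vs-rejected form that property~1 actually provides; and on the reject side it does \emph{not} require a positive upper-bound certificate at all --- the interval endpoint $t_2(j)$ is the largest level at which a keep certificate exists, i.e.\ the condition imposed at level $l_j+1$ is the \emph{absence} of an inference (the paper's condition (c)), with the certified-equality band of condition (b) covering the levels in between. If you want to keep your symmetric two-sided-certificate formulation, you would additionally have to prove that the bounds produced by the training phase always certify the kept-vs-kept and rejected-vs-rejected comparisons you invoke; the lemma's hypothesis does not give you that, so as written the proof does not go through.
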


We now prove that the required sequence of thresholds $\{l_i\}_{i=1}^n$ exist with high probability. The proof is very similar to Lemma~\ref{lemma:allpairsratios}, and included in Appendix \ref{Appendix:OmittedProofs}.

\begin{lemma}\label{lemma:thresholdswhp}
After updating the algorithm's upper and lower bounds using $m=O((n(k+2))^2 log((n(k+2))^2/\delta) /\delta)$ observations, when considering a new example $(p, B)$, the sequence of thresholds $\{l_i\}_{i=1}^n$ exists with probability at least $1-2\delta$.
\end{lemma}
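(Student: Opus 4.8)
The plan is to transcribe the proof of Lemma~\ref{lemma:allpairsratios} almost verbatim, with the $N := n(k+2)$ discretized derivative values $\{V(i,r)\}$ from Definition~\ref{Def:AddSep} playing the role of the $n$ goods in the linear case, and the $N^2$ pairwise ratios $V(i,r)/V(j,s)$ playing the role of the ratios $v_i/v_j$. First I would define, for each index tuple $(i,r,j,s)$, thresholds $a(i,r,j,s)$ and $b(i,r,j,s)$ exactly as $a_{i,j}$ and $b_{i,j}$ were defined, but for the ratio $V(i,r)/V(j,s)$ and restricted to draws $(p,B)\sim\mathcal{D}$ whose optimal bundle discretizes goods $i$ and $j$ into the cells $r$ and $s$ (i.e. whose update touches precisely $L(i,r,j,s)$ or $U(i,r,j,s)$). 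As in the linear case, $a(i,r,j,s)$ is chosen so that a single informative draw tightens $L(i,r,j,s)$ past it with probability at least $\delta/N^2$, while the ``unresolved'' region above it carries mass at most $\delta/N^2$; symmetrically for $b$ and $U$.

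The concentration step is then identical to Lemma~\ref{lemma:allpairsratios}. Each of the $m$ i.i.d.\ training draws tightens a fixed lower bound $L(i,r,j,s)$ past $a(i,r,j,s)$ with probability at least $\delta/N^2$, so after all $m$ draws it remains below $a(i,r,j,s)$ with probability at most $(1-\delta/N^2)^m \le e^{-\ln(N^2/\delta)} = \delta/N^2$, and likewise for every $U(i,r,j,s)$ versus $b(i,r,j,s)$. A union bound over the at most $N^2$ tuples gives that, with probability at least $1-\delta$ over the training set, every learned interval satisfies $[L(i,r,j,s),U(i,r,j,s)] \subseteq [a(i,r,j,s),b(i,r,j,s)]$. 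The sample size $m = O(N^2\ln(N^2/\delta)/\delta)$ is exactly what this requires.

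The genuinely new work is the fresh-example step. Condition on the good training event and let $(p,B)$ be a new example with optimal bundle $x^*$ and threshold $\tau$ from Lemma~\ref{Lemma:OptimalBundleAdditivelySeparable}. I would take the candidate thresholds $l_i = \lfloor k x^*_i \rfloor$ (with the conventions $V(i,-1)=\infty$ and $V(i,k+1)=0$ absorbing $x^*_i=0$ and $x^*_i=1$). The budget property is purely deterministic: since $\max\{l_i,0\}/k \le x^*_i \le \min\{l_i+1,k\}/k$, summing against $p_i$ sandwiches the true cost $p\cdot x^*$, hence $B$, between the two budget sums (treating separately the degenerate regime in which the agent does not exhaust the budget). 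For the pairwise property, concavity is the key: rounding down keeps $V(i,l_i) = v_i'(\lfloor k x^*_i\rfloor/k) \ge v_i'(x^*_i)$, so $V(i,l_i)/p_i \ge \tau$, while rounding strictly past $x^*_j$ gives $V(j,l_j+1) = v_j'((\lfloor k x^*_j\rfloor+1)/k) \le \tau p_j$, so $V(j,l_j+1)/p_j \le \tau$; hence the true ratio obeys $V(i,l_i)/V(j,l_j+1) \ge p_i/p_j$ for every ordered pair. The learned bounds certify this inequality unless $p_i/p_j$ lies in the unresolved window $[L(i,l_i,j,l_j+1),\, V(i,l_i)/V(j,l_j+1)]$, which (using $L \ge a(i,l_i,j,l_j+1)$) is contained in the region defining $a$ and so has probability at most $\delta/N^2$.

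Summing this over the $n(n-1)$ ordered good-pairs and, for each, over its $O(k^2)$ possible cells (disjoint events, total at most $n^2(k+2)^2\cdot\delta/N^2 = \delta$) shows that with probability at least $1-\delta$ over the fresh draw all required inequalities are certified, so the threshold sequence exists; a final union bound with the $1-\delta$ training event yields the claimed $1-2\delta$. I expect the main obstacle to be the bookkeeping in the fresh-example step: choosing the floor/ceiling indices on the two sides of each comparison (the asymmetric $l_i$ versus $l_j+1$) so that concavity turns the continuous optimality condition into a valid discretized inequality, and correctly handling the boundary thresholds $l_i\in\{-1,0,k\}$ together with the dummy values $V(i,-1),V(i,k+1)$ and the degenerate budget case. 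The probabilistic core is otherwise a direct copy of Lemma~\ref{lemma:allpairsratios}.
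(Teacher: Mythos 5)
Your proposal is correct and follows essentially the same route as the paper's own proof: the same discretized thresholds $a(i,r,j,s)$, $b(i,r,j,s)$ mimicking Lemma~\ref{lemma:allpairsratios}, the same concentration-plus-union-bound over the $m$ training draws, the same choice $l_i=\lfloor kx^*_i\rfloor$ (with $l_i=-1$ when $x^*_i=0$) for a fresh example, the same concavity chain $V(i,l_i)\ge v'_i(x^*_i)$ and $V(j,l_j+1)\le v'_j(x^*_j)$ certified outside the unresolved price-ratio windows, and the same deterministic sandwich for the budget property. The only differences are cosmetic bookkeeping (floor-plus-one versus ceiling at grid points, and your explicit flagging of the degenerate unexhausted-budget case), which the paper glosses over in the same way.
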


To conclude, we just need to show that if we find the thresholds with the desired properties, the returned bundle is a good approximation of the optimum bundle. The proof can be found in Appendix~\ref{Appendix:OmittedProofs}.

\begin{theorem}\label{Theorem:AdditivelySeparable}
For any $\epsilon > 0$, we can find some  $k$ (the discretization factor) such that with probability at least $1-2\delta$ over the choice of example $(p, B)$, the bundle $\hat{x} = \hat{x}(p, B)$ returned by our mechanism admits at least one of the following properties:
\begin{enumerate}
\item For each item $1 \leq i \leq n$, we have that $\hat{x}_i \geq x^*_i-\epsilon$,
\item $v^*(\hat{x}) \geq v^*(x^*) - \epsilon$
\end{enumerate}
In other words, have that our mechanism is an efficient $(\epsilon,\delta)$-learning algorithm for the class of linearly separable concave utility functions with bounded range $v:[0,1]^n\rightarrow [0,1]$.
\end{theorem}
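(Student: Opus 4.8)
The plan is to condition on the high-probability event of Lemma~\ref{lemma:thresholdswhp}---that a valid sequence of thresholds $\{l_i\}$ exists for the new example $(p,B)$, which happens with probability at least $1-2\delta$---and then argue that, on this event, the bundle $\hat{x}$ returned by the algorithm \emph{deterministically} satisfies one of the two alternatives. Lemma~\ref{Lemma:FindThresholds} guarantees we can actually compute such thresholds in polynomial time, so the procedure is efficient, and the content of the theorem reduces to a structural comparison between $\hat{x}$ and the true optimal bundle $x^*$ characterized by Lemma~\ref{Lemma:OptimalBundleAdditivelySeparable}. The first step is to extract from the pairwise constraint $\frac{V(i,l_i)}{p_i}\geq\frac{V(j,l_j+1)}{p_j}$ a single threshold $\tau'$ with $v'_j((l_j+1)/k)/p_j \leq \tau' \leq v'_i(l_i/k)/p_i$ for all $i,j$ (take $\tau'=\min_i V(i,l_i)/p_i$). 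Since $v^*$ is concave each $v'_i$ is nonincreasing, so the true threshold bundle $x^{\tau'}$ satisfies $x^{\tau'}_i\in[l_i/k,(l_i+1)/k]$ (with the boundary conventions $x^{\tau'}_i=0$ when $l_i=-1$ and $x^{\tau'}_i=1$ when $l_i=k$), and the returned bundle obeys the same bracketing, giving $|\hat{x}_i-x^{\tau'}_i|\leq 1/k$ coordinatewise.

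Next I would split on how the cost $p^{\tau'}=\sum_i p_i x^{\tau'}_i$ compares to $B$, using that $p^{\tau}$ is nonincreasing in $\tau$ and that $x^*=x^{\tau^*}$ spends (essentially) the full budget. If $p^{\tau'}\geq B$ then $\tau'\leq\tau^*$, hence $x^{\tau'}_i\geq x^*_i$ for every $i$, and therefore $\hat{x}_i\geq x^{\tau'}_i-1/k\geq x^*_i-1/k\geq x^*_i-\epsilon$ (taking $k\geq 1/\epsilon$, which the definition of $k$ permits); this yields the first alternative. The remaining and harder case is $p^{\tau'}<B$: here the constraint $B\leq\sum_i p_i\min\{l_i+1,k\}/k$ forces the existence of goods with $0\leq l_i<k$, so the algorithm's ``spend the remaining budget'' step is non-vacuous and $\hat{x}$ costs exactly $B$, matching the cost of $x^*$.

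The crux is to establish the second alternative in this case, i.e. $v^*(\hat{x})\geq v^*(x^*)-\epsilon$, and this is where I expect the main difficulty and where the particular choice of $k$ is forced. Since $v^*$ is concave, the tangent inequality at $\hat{x}$ gives
\begin{equation*}
v^*(x^*)-v^*(\hat{x})\;\leq\;\sum_{i=1}^n v'_i(\hat{x}_i)\,(x^*_i-\hat{x}_i).
\end{equation*}
The plan is to replace each $v'_i(\hat{x}_i)$ by $\tau' p_i$ at an additive cost controlled by the bounded second derivative: for interior goods both $\hat{x}_i$ and the crossing point of $v'_i/p_i$ at $\tau'$ lie in an interval of length $1/k$, so $|v'_i(\hat{x}_i)-\tau' p_i|\leq Q/k$, while for the boundary goods ($l_i\in\{-1,k\}$) the sign of $x^*_i-\hat{x}_i$ makes the substitution only help. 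Summing, the leading term becomes $\tau'\sum_i p_i(x^*_i-\hat{x}_i)=\tau'(B-B)=0$ because both bundles spend exactly $B$, and the residual is at most $(Q/k)\sum_i|x^*_i-\hat{x}_i|\leq (Q/k)\cdot 2B/\min_j p_j=(2Q/k)\max_j(B/p_j)$. By the definition of $k$ in Definition~\ref{Def:AddSep} this is at most $\epsilon$, establishing the second alternative.

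Finally I would assemble the pieces: on the event of Lemma~\ref{lemma:thresholdswhp} (probability $\geq 1-2\delta$) one of the two alternatives always holds, and the second alternative is exactly the statement that the predicted bundle is within additive $\epsilon$ of optimal value, which---together with feasibility of $\hat{x}$, giving $v^*(x^*)\geq v^*(\hat{x})$---certifies the $(\epsilon,\delta)$-learning guarantee, while the first alternative is the stronger coordinatewise guarantee available whenever the learned threshold over-spends. Efficiency is inherited from Lemma~\ref{Lemma:FindThresholds} and the polynomial sample bound $m$. I expect the only genuinely delicate point to be the second-derivative bookkeeping in the value estimate---in particular verifying that the $\tau'$-term cancels exactly and that the residual matches the prescribed $k$---whereas the threshold extraction and the over-spending case are routine once the optimal-bundle characterization of Lemma~\ref{Lemma:OptimalBundleAdditivelySeparable} is in hand.
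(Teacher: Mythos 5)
Your proposal is correct in its essentials, but it takes a genuinely different route from the paper's proof. The paper splits on whether some coordinate of the true optimum escapes its discretization bracket: if $x^*_i < (l_i+1)/k$ for all $i$, the coordinatewise guarantee (property 1) follows immediately since $\hat{x}_i \geq l_i/k$; otherwise it shows $x^*_j \geq l_j/k$ for every $j$, so the algorithm's base purchases sit below the optimum, and then compares, per unit of leftover budget, the marginal rate the algorithm earns (at least $(v'_j(l_j/k)-Q/k)/p_j$) against the rate the optimum earns (at most $(v'_{j'}((l_{j'}+1)/k)+Q/k)/p_{j'}$), concluding the value gap is at most $2QB/(k\min_j p_j)\leq\epsilon$. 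You instead compress the learned pairwise constraints into a single scalar $\tau'=\min_i V(i,l_i)/p_i$, split on whether the threshold bundle $x^{\tau'}$ over- or under-spends $B$, and in the value case apply the global concavity tangent inequality $v^*(x^*)-v^*(\hat{x})\leq\sum_i v'_i(\hat{x}_i)(x^*_i-\hat{x}_i)$, replace each $v'_i(\hat{x}_i)$ by $\tau' p_i$ up to $\pm Q/k$ (the boundary goods $l_i\in\{-1,k\}$ only help, since $V(i,0)\leq\tau' p_i$ and $V(i,k)\geq\tau' p_i$), and cancel the leading term against the budget. Your route is arguably cleaner: the tangent-inequality argument never needs the paper's intermediate claim $x^*_j\geq l_j/k$, and in fact it proves property 2 essentially unconditionally once the thresholds of Lemma~\ref{lemma:thresholdswhp} exist, because all it requires is $p\cdot\hat{x}\geq p\cdot x^*$, which holds since $\hat{x}$ exhausts the budget; your Case A is therefore not even necessary, though it yields the stronger coordinatewise statement when it applies. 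Two points of polish: (i) the exact cancellation $\tau'(B-B)=0$ should really be the inequality $\tau'(p\cdot x^*-p\cdot\hat{x})\leq 0$, since $x^*$ may underspend the budget (e.g.\ when the budget suffices to buy everything) --- this only helps; (ii) the bracketing $x^{\tau'}_i\leq(l_i+1)/k$ that Case A relies on can fail in degenerate ``flat'' regions where $v'_i$ is constant equal to $\tau' p_i$ beyond $(l_i+1)/k$, since the max in the definition of $x^{\tau}_i$ then jumps ahead; the paper's own characterization in Lemma~\ref{Lemma:OptimalBundleAdditivelySeparable} has the same tie-breaking looseness, and in such flat cases the exchange costs nothing so your Case B bound covers them, but this should be stated rather than absorbed into a ``boundary convention.'' Both proofs ultimately rest on the same quantitative ingredients --- the constraints certified by Lemma~\ref{lemma:thresholdswhp}, the $Q/k$ second-derivative control, and the choice of $k$ in Definition~\ref{Def:AddSep} --- and arrive at the same requirement $k\geq(2Q/\epsilon)\max_j(B/p_j)$.
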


\section{A Learning Algorithm based on Sampling from a Convex Polytope}
\label{sec:sampling}
In this section, we present another learning algorithm for $(\epsilon,\delta)$-learning linear cost functions. We introduce a new model, that gets a stronger form of feedback from the agent, and as a result achieve an improved sample complexity bound that requires only $m = \tilde{O}\left(\frac{n\textrm{polylog}(n)}{\delta^3}\right)$ observations.

During the training phase of our algorithm, it will interact with the agent by adding constraints to a linear program and given a new example, propose a candidate bundle to the agent. The agent will either accept the candidate bundle (if it is approximately optimal), or else return to the algorithm a set of linear constraints witnessing the suboptimality of the proposed bundle. The main idea is that for each new example either our algorithm's bundle is almost optimal, or we receive a set of linear constraints to add to our linear program that substantially reduce the volume of the feasible polytope. If the set of constraints are restrictive enough, with high probability, we achieve an approximately optimum bundle on all new examples, and we can end the training phase. Otherwise each new example cuts off some constant fraction of the linear program polytope with high probability. After feeding a polynomial number of examples, and using some arguments to upper bound the volume of the polytope at the beginning and lower bound its volume at the end, we can prove with high probability, the algorithm finds an almost optimal bundle for future examples. First we explain the model, and then we present our algorithm.

\textbf{ Model:} We consider agents with linear utility functions, here bounded so that $v \in [0,1]^n$ . If we have that $v \cdot \hat{x} \geq v \cdot x - \epsilon$, we say bundle $\hat{x}$ is an $\epsilon$-additive approximation to the optimal bundle $x^* = x^*(v^*, p, B)$, and it will be accepted by the agent if it is proposed. If a proposed bundle $\hat{x}$ is not $\epsilon$-approximately optimal, the agent rejects the bundle if proposed, and instead returns a set of inequalities which are witness to the sub-optimality of our solution. The agent returns all valid inequalities of the following form for different pairs of objects $i,j \in [n]$:  $\frac{v_i-\epsilon'}{p_i} > \frac{v_j+\epsilon'}{p_j}$ where $\epsilon' = \epsilon/nM$, and $M$ is the maximum ratio of two different prices in the domain of the price distribution ($\mathcal{D}$).

Intuitively, for these pairs we have that $\frac{v_i}{p_i}$ is greater than $\frac{v_j}{p_j}$ by some non-negligible margin.
In the following, we show that for any suboptimal bundle (not an $\epsilon$-additive approximation) resulted from a value vector $\hat{v}$, there exists at least one of these inequalities  for which we have that $\frac{\hat{v}_i}{p_i} \leq \frac{\hat{v}_j}{p_j}$.
In other words, these set of inequalities that our algorithm returns could be seen as some evidence of suboptimality for any suboptimal bundle for example $(p,B)$.

\begin{lemma}\label{lemma:suboptimalvalues}
For any pair of price vector and budget $(p,B)$, and a suboptimal sampled value vector $\hat{v}$ (that does not generate an $\epsilon$-approximately optimal bundle $\hat{x}$), there exists at least one pair of items $(i,j)$ such that we have $\frac{v_i-\epsilon'}{p_i} > \frac{v_j+\epsilon'}{p_j}$,  and $\frac{\hat{v}_i}{p_i} \leq \frac{\hat{v}_j}{p_j}$.
\end{lemma}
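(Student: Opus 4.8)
The plan is to prove the contrapositive: assuming that \emph{no} pair $(i,j)$ with both stated properties exists, I will show that the bundle $\hat{x}$ generated by $\hat{v}$ is in fact an $\epsilon$-approximately optimal bundle, contradicting the hypothesis that $\hat{v}$ is suboptimal. The non-existence assumption says, for every pair, that $\frac{v_i-\epsilon'}{p_i} > \frac{v_j+\epsilon'}{p_j}$ implies $\frac{\hat{v}_i}{p_i} > \frac{\hat{v}_j}{p_j}$; equivalently, by contraposition, whenever $\frac{\hat{v}_i}{p_i} \leq \frac{\hat{v}_j}{p_j}$ we must have $\frac{v_i-\epsilon'}{p_i} \leq \frac{v_j+\epsilon'}{p_j}$. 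This contrapositive is the hypothesis I will feed the structural argument.

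First I would set $d = x^* - \hat{x}$ and, assuming both greedy bundles exhaust the budget (the leftover-budget case is handled separately and is easier), note $p \cdot d = 0$. Partition the goods into $P = \{i : d_i > 0\}$ and $N = \{j : d_j < 0\}$, and let $C = \sum_{i\in P} p_i d_i = \sum_{j\in N} p_j(-d_j) \geq 0$ be the common ``transported'' cost. The quantity I must control is the value gap $v \cdot d = v(x^*) - v(\hat{x})$, which is positive precisely because $x^*$ is optimal for $v$.

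The key structural step invokes the divisible-knapsack (greedy) characterization of $\hat{x}$ from the first lemma of Section~\ref{section:APCA}: $\hat{x}$ fills goods in decreasing order of $\frac{\hat{v}_\ell}{p_\ell}$. For $i\in P$ we have $\hat{x}_i < x^*_i \leq 1$, so $\hat{x}$ does not fully buy $i$; for $j\in N$ we have $\hat{x}_j > x^*_j \geq 0$, so $\hat{x}$ buys some of $j$. A good that is bought positively cannot have strictly smaller ratio than one that is left unfilled, so $\frac{\hat{v}_i}{p_i} \leq \frac{\hat{v}_j}{p_j}$ for every $i\in P,\, j\in N$ (ties and the single fractional good are handled by a consistent tie-breaking convention). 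Plugging this into the contrapositive assumption yields $\frac{v_i-\epsilon'}{p_i} \leq \frac{v_j+\epsilon'}{p_j}$, i.e. $\frac{v_i}{p_i} \leq \frac{v_j}{p_j} + \epsilon'\bigl(\tfrac{1}{p_i}+\tfrac{1}{p_j}\bigr)$, for all such pairs.

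Finally I would convert these pairwise margin inequalities into a single additive bound. Writing $v\cdot d = \sum_{i\in P}\frac{v_i}{p_i}(p_id_i) - \sum_{j\in N}\frac{v_j}{p_j}(p_j(-d_j))$ and bounding each $\frac{v_i}{p_i}$ on $P$ by its maximum and each $\frac{v_j}{p_j}$ on $N$ by its minimum gives $v\cdot d \leq \bigl(\max_{i\in P}\frac{v_i}{p_i} - \min_{j\in N}\frac{v_j}{p_j}\bigr)C$, and the pairwise bound controls the ratio difference for the extremal pair $(i_0,j_0)$ by $\epsilon'\bigl(\tfrac{1}{p_{i_0}}+\tfrac{1}{p_{j_0}}\bigr)$. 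Using $p_\ell/p_{\ell'}\leq M$, quantities at most $1$, and at most $n$ goods on each side shows $\frac{C}{p_{i_0}}$ and $\frac{C}{p_{j_0}}$ are each $O(nM)$, whence $v\cdot d = O(\epsilon' nM) = O(\epsilon)$ for the choice $\epsilon' = \epsilon/(nM)$, contradicting suboptimality $v\cdot d > \epsilon$. The main obstacle is exactly this last step: turning the many pairwise margin inequalities into one additive bound with the correct constant while carefully tracking the price ratio $M$ and the count $n$, together with the boundary cases (greedy ties, the fractional good, and budgets that are not fully spent, which must be argued to leave the value gap small).
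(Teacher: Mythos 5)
Your proposal is correct in substance, and it is essentially the contrapositive of the paper's argument, sharing its three structural pillars: both bundles exhaust the budget (else both are the all-ones bundle), optimality of $\hat{x}$ for $\hat{v}$ forces $\frac{\hat{v}_i}{p_i} \leq \frac{\hat{v}_j}{p_j}$ whenever $\hat{x}_i < 1$ and $\hat{x}_j > 0$, and the value gap decomposes over budget-preserving pairwise exchanges between your sets $P$ and $N$. The difference lies in the direction and the aggregation. The paper argues directly: it splits $x^* - \hat{x}$ into at most $n$ explicit exchanges, pigeonholes to find one exchange gaining more than $\epsilon/n$, and extracts the witness pair from that single exchange, using $x^*_i - \hat{x}_i \leq 1$ to convert the gain into the margin inequality; this buys an explicit construction of the pair $(i,j)$, which matches the feedback the agent actually returns in the model. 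You instead assume no witness pair exists and aggregate all the resulting margin inequalities through the extremal-pair bound $v \cdot d \leq \left(\max_{i \in P} \frac{v_i}{p_i} - \min_{j \in N} \frac{v_j}{p_j}\right) C$ with $\frac{C}{p_{i_0}}, \frac{C}{p_{j_0}} \leq nM$; this is logically clean but, as you yourself flag, lossy in constants: it yields $v \cdot d \leq 2nM\epsilon' = 2\epsilon$ for $\epsilon' = \epsilon/(nM)$, which does not quite contradict $v \cdot d > \epsilon$. In fairness, the paper has exactly the same factor-of-two slop (its proof establishes the margin only with $\epsilon/(2nM)$ in place of the $\epsilon' = \epsilon/(nM)$ defined in the model), and both arguments are repaired by taking $\epsilon' = \epsilon/(2nM)$, so this is a shared blemish rather than a gap in your approach. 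If you want to tighten your aggregation step, replace the extremal pair by a coupling: choose transport weights $c_{ij} \geq 0$ with row sums $p_i d_i$ and column sums $p_j|d_j|$, and sum the margin inequalities weighted by $c_{ij}$ to get $v \cdot d \leq \epsilon'\left(\sum_{i \in P} d_i + \sum_{j \in N} |d_j|\right) \leq 2n\epsilon'$, which removes the dependence on $M$ from this step entirely (this coupling is precisely the paper's exchange decomposition, viewed globally rather than through the pigeonhole).
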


\begin{proof}
Let $x^*$ and $\hat{x}$ be the optimal bundle and the returned bundle based on $\hat{v}$ respectively.
We note that since all objects have non-negative values, we have that $x^* \cdot p = \hat{x} \cdot p = B$ unless the budget $B$ is enough to buy all objects in which case both $x^*$ and $\hat{x}$ are equal to $(1, 1, \cdots, 1)$ which is a contradiction because we assumed $\hat{x}$ is suboptimal.

We can exchange $v/p_i$ units of object $i$ with $v/p_j$ units of item $j$ and vice versa without violating the budget constraint. We show that all the differences in entries of $x^*$ and $\hat{x}$ can be seen as the sum of at most $n$ of these simple exchanges between pairs of objects as follows. We take two entries $i$ and $j$ such that $x^*_i > \hat{x}_i$ and $x^*_j < \hat{x}_j$. We note that as long as two vectors $x^*$ and $\hat{x}$ are not the same, we can find such a pair because we also have that $v \cdot p = \hat{v} \cdot p$.  Without loss of generality, assume that $(x^*_i-\hat{x}_i)p_i \leq (\hat{x}_j-x^*_j)p_j$. Now we buy $x^*_i-\hat{x}_i$ more units of item $i$ in bundle $\hat{x}$ to make the two entries associated with object $i$ in bundles $x^*$ and $\hat{x}$ equal. Instead we buy $(x^*_i-\hat{x}_i)p_i/p_j$ fewer units of object $j$ to obey the budget limit $B$. This way, we decrease the number of different entries in $x^*$ and $\hat{x}$, so after at most $n$ exchanges we make $\hat{x}$ equal to $x^*$. By assumption, $v^*(\hat{x}) \leq v^*(x^*)-\epsilon$. Therefore, in at least one of these exchanges, the value of $\hat{x}$ is increased by more than $\epsilon/n$.

Assume this increase happened in exchange of objects $i$ and $j$. Let $r$ be $(x^*_i-\hat{x}_i)p_i$. We bought $r/p_i$ more units of $i$, and $r/p_j$ fewer units of $j$. The increase in value is $r(v_i/p_i-v_j/p_j) = (x^*_i-\hat{x}_i)(v_i-v_jp_i/p_j) \geq \epsilon/n$. Since $x^*_i-\hat{x}_i$ is at most $1$, we also have that $v_i-v_jp_i/p_j > \epsilon/n$ which can be rewritten as: $v_i- \epsilon/2n > v_jp_i/p_j + \epsilon/2n$. This is equivalent to $\frac{v_i- \epsilon/2n}{p_i} > \frac{v_j + (\epsilon/2n)(p_j/p_i)}{p_j}$. We can conclude that $\frac{v_i- \epsilon/(2nM)}{p_i} > \frac{v_j + (\epsilon/2nM)}{p_j}$ which is by definition of $\epsilon'$: $\frac{v_i- \epsilon'}{p_i} > \frac{v_j + \epsilon'}{p_j}$.

We also note that $\hat{x}_i < 1$ and $\hat{x}_j > 0$, so we can infer that $\frac{\hat{v}_i}{p_i} \leq \frac{\hat{v}_j}{p_j}$. Otherwise one could exchange some fraction of $j$ with some fraction of $i$ and gain more value with respect to value vector $\hat{v}$. This completes the proof of both inequalities claimed in this lemma.
\end{proof}

\textbf{Algorithm:} We maintain a linear program with $n$ variables representing a hypothesis value vector $\hat{v}$. Since $v$ is in $[0,1]^n$, we initially have the constraints: $0 \leq v_i \leq 1$ for any $1 \leq i \leq n$. At any given time, our set of constraints forms a convex body $K$.

Our algorithm loops until we reach a desired property. At each step of the loop we sample $\frac{C \log(n)\log(1/\delta)}{\delta^2}$ examples, and for each of them  we sample uniformly at random a vector $\hat{v}$ from the convex body $K$, and predict the optimal bundle based on this sampled vector. (Note that uniform sampling from a convex body can be done in polynomial time by \cite{FK}). At the end of the loop, we add the linear constraints that we obtained as feedback from the agent to our linear program, and get a more restricted version of $K$ which we call $K'$.

If the volume of $K'$ is greater than $1-\delta$ times the volume of $K$, we stop the learning algorithm, and return $K$ as the candidate convex body. Otherwise, we replace $K$ with the new more constrained body $K'$, and repeat the same loop again. To avoid confusion, we name the final returned convex body $\hat{K}$. After the training phase ends, for future examples, our algorithm samples a value vector $\hat{v}$ uniformly at random from this convex body $\hat{K}$, and predicts the optimal bundle based on $\hat{v}$. We explain what kinds of constraints we add at the end of each loop to find $K'$.

Each iteration of the training phase uses  $\frac{C\log(n)\log(1/\delta)}{\delta^2}$ examples. Recall that for each one, the mechanism proposes a bundle to the agent, who either accepts or rejects it. For each rejected bundle, we are given a set of pairs of objects $(i,j)$ such that $\frac{v_i-\epsilon'}{p_i} > \frac{v_j+\epsilon'}{p_j}$. For each inequality like this, we add the looser constraint $\frac{v_i}{p_i} > \frac{v_j}{p_j}$. At the end, we have a more restricted convex body $K'$ which is formed by adding all of these constraints to $K$.

We must show that after the training phase of the algorithm terminates, we are left with a hypothesis which succeeds at predicting valuable bundles with high probability. We must also also  bound the number of iterations (and therefore the number of examples used by the algorithm) before the training phase of the algorithm terminates. First we bound the total number of iterations of the training phase.

\begin{lemma}
\label{lemma:samplingExamples}
The total number of examples sampled by our algorithm is at most
$$m = O\left(\frac{n\log(n)(\log(n)+\log(M))\log(1/\epsilon)\log(1/\delta)}{\delta^3}\right).$$
\end{lemma}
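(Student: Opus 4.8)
The plan is to bound the number of iterations of the training loop and then multiply by the per-iteration sample budget $\frac{C\log(n)\log(1/\delta)}{\delta^2}$. The number of iterations is controlled by a volume potential argument on the feasible convex bodies $K$: the algorithm begins with the unit cube $[0,1]^n$, so $\mathrm{vol}(K_0)=1$, and I would show that the volume both has a floor it can never drop below and shrinks geometrically at every non-terminating step.

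First I would establish the volume lower bound, which is the crux. The constraints we ever add have the loose form $\frac{v_i}{p_i} > \frac{v_j}{p_j}$, but we add one only when the agent has returned the strict witness $\frac{v_i-\epsilon'}{p_i} > \frac{v_j+\epsilon'}{p_j}$ (cf. Lemma~\ref{lemma:suboptimalvalues}). Hence the true value vector $v^*$ satisfies every added constraint with a margin, namely $\frac{v^*_i}{p_i} - \frac{v^*_j}{p_j} > \epsilon'\left(\frac{1}{p_i}+\frac{1}{p_j}\right)$. Perturbing each coordinate of $v^*$ by at most $\epsilon'$ changes $\frac{v_i}{p_i}-\frac{v_j}{p_j}$ by at most $\epsilon'\left(\frac{1}{p_i}+\frac{1}{p_j}\right)$ in magnitude, so each such constraint remains satisfied. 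Since $v^*\in[0,1]^n$, an axis-aligned cube of side $\Omega(\epsilon')$ centered near $v^*$ stays inside both $[0,1]^n$ and every cut, giving $\mathrm{vol}(K)\geq (\epsilon'/2)^n$ throughout the run.

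Next I would invoke the stopping rule: the algorithm proceeds to a new iteration only when $\mathrm{vol}(K')\leq (1-\delta)\,\mathrm{vol}(K)$, so each non-terminating iteration shrinks the volume by a factor of at least $(1-\delta)$. After $T$ such iterations $\mathrm{vol}(K_T)\leq (1-\delta)^T\leq e^{-\delta T}$, and combining with the floor $(\epsilon'/2)^n \leq e^{-\delta T}$ and taking logarithms yields
$$T \leq \frac{n\ln(2/\epsilon')}{\delta} = O\!\left(\frac{n(\log n + \log M + \log(1/\epsilon))}{\delta}\right)$$
after substituting $\epsilon' = \epsilon/(nM)$. Multiplying $T+1$ by the per-iteration budget $\frac{C\log(n)\log(1/\delta)}{\delta^2}$, and using $\log n + \log M + \log(1/\epsilon) = O((\log n + \log M)\log(1/\epsilon))$ (valid once $\log(1/\epsilon)\geq 1$ and $\log n + \log M \geq 1$), gives exactly the claimed bound $m = O\!\left(\frac{n\log(n)(\log n + \log M)\log(1/\epsilon)\log(1/\delta)}{\delta^3}\right)$.

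The main obstacle is the volume lower bound in the second step: one must argue that a genuinely full-dimensional neighborhood of $v^*$ survives \emph{every} cut, which is exactly where the $\epsilon'$ margin guaranteed by Lemma~\ref{lemma:suboptimalvalues} is essential — if the constraints could be tight at $v^*$, no such persistent floor would exist. A secondary technical wrinkle is that the volumes compared in the stopping test are only estimated by uniform sampling from a convex body (\cite{FK}), so one should check that these estimates are accurate to within a constant factor with high probability; this only perturbs the effective $(1-\delta)$ shrinkage factor and is absorbed into the $O(\cdot)$.
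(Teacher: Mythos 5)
Your proof is correct and follows essentially the same route as the paper's: lower-bound the volume of the final body by showing the $\epsilon'$-neighborhood of the true value vector $v^*$ (intersected with $[0,1]^n$) survives every cut because each added constraint was triggered by the witness $\frac{v_i-\epsilon'}{p_i} > \frac{v_j+\epsilon'}{p_j}$, then combine the $(\epsilon')^n$-type floor with the $(1-\delta)$ per-iteration shrinkage and the per-iteration sample budget $\frac{C\log(n)\log(1/\delta)}{\delta^2}$. Your margin phrasing is just a repackaging of the paper's direct contradiction argument, and your closing remark about approximate volume estimation is a reasonable extra precaution the paper leaves implicit.
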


Finally, we argue that after the learning phase terminates, the algorithm returns a good hypothesis.

\begin{theorem}
The algorithm $(\epsilon, \delta)$-learns from the set of linear utility functions.
\end{theorem}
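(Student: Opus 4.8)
The plan is to show that the body $\hat K$ returned by the training phase is, with high probability over the training examples, a \emph{good} hypothesis, in the sense that a value vector drawn uniformly from $\hat K$ produces an $\epsilon$-approximately optimal bundle on all but a small fraction of future examples. For a convex body $K$ write $g_K(p,B) = \text{vol}(S(p,B)\cap K)/\text{vol}(K)$ for the fraction of $K$ consisting of vectors that produce a bundle that is \emph{not} $\epsilon$-approximately optimal for $(p,B)$; call such vectors suboptimal. For a fresh example $(p,B)$ and a fresh sample $\hat v\sim\text{Unif}(\hat K)$, the probability of outputting a suboptimal bundle is exactly $\E_{(p,B)\sim\mathcal D}[g_{\hat K}(p,B)]$, so it suffices to bound this expectation.

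The key structural input is Lemma~\ref{lemma:suboptimalvalues}: every suboptimal $\hat v$ violates at least one of the relaxed feedback constraints $\frac{v_i}{p_i}>\frac{v_j}{p_j}$ that the agent returns for $(p,B)$. Consequently, whenever a sampled $\hat v_t$ lands in the suboptimal region $S(p_t,B_t)$ the proposed bundle is rejected, and the constraints we add remove from the current body the entire region $S(p_t,B_t)\cap K$ (the target $v$ satisfies all of these relaxed constraints, so it stays feasible and $K$ stays nonempty). In particular, a rejected example with $g_K(p_t,B_t)\ge\delta$ removes at least a $\delta$ fraction of the volume, which by the termination test forces the loop to continue rather than stop.

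Using this I would call a body $K$ \emph{bad} if $\Pr_{(p,B)}[g_K(p,B)\ge\delta]>\delta$, and argue that the algorithm is very unlikely to terminate in a loop whose current body is bad. Conditioning on the body $K$ at the start of a loop, the $N=C\log(n)\log(1/\delta)/\delta^2$ examples of that loop are i.i.d.; for each, the probability that it is ``heavy'' ($g_K\ge\delta$) \emph{and} that its sampled $\hat v$ falls in $S$ (triggering a cut of size $\ge\delta$) is $\E_{(p,B)}[\mathbb{1}(g_K\ge\delta)\,g_K]\ge\delta\Pr[g_K\ge\delta]>\delta^2$. Hence the probability that none of the $N$ examples produces a $\ge\delta$ cut---which is necessary for termination---is at most $(1-\delta^2)^N\le e^{-\delta^2 N}$, which for a sufficiently large absolute constant $C$ is below $\delta$ divided by the number of loops (the latter being polynomially bounded by Lemma~\ref{lemma:samplingExamples}). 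Union-bounding over the loops shows that with probability at least $1-\delta$ the training phase terminates on a body that is not bad, i.e.\ $\Pr_{(p,B)}[g_{\hat K}\ge\delta]\le\delta$.

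Finally I would combine the two pieces: on a non-bad $\hat K$,
$$\E_{(p,B)}[g_{\hat K}(p,B)]\le\delta\cdot\Pr[g_{\hat K}<\delta]+\Pr[g_{\hat K}\ge\delta]\le 2\delta,$$
so a uniform draw from $\hat K$ is $\epsilon$-approximately optimal with probability at least $1-2\delta$ over future examples; rescaling the internal accuracy parameter from $\delta$ to $\delta/2$ yields the claimed $(\epsilon,\delta)$-learning guarantee, and efficiency follows from Lemma~\ref{lemma:samplingExamples} together with the polynomial-time uniform sampling of \cite{FK}. The main obstacle is the middle step: because the iterates $K$ are themselves data-dependent, one must argue by conditioning on the history at the start of each loop (so that the loop's examples are fresh and independent) and then union-bound over loops, while simultaneously using Lemma~\ref{lemma:suboptimalvalues} to guarantee that a rejected heavy example provably cuts a full $\delta$ fraction of the volume, so that the volume-based termination test genuinely certifies a distributionally good hypothesis.
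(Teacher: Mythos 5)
Your proposal is correct and follows essentially the same argument as the paper: both hinge on Lemma~\ref{lemma:suboptimalvalues} to show that a rejected example excises the entire suboptimal region for that $(p,B)$ from the current body, then bound the probability of terminating on a ``bad'' body by $(1-\delta^2)^N$ per loop and union-bound over the (polynomially many, via Lemma~\ref{lemma:samplingExamples}) loops. Your final accounting through $\E_{(p,B)}[g_{\hat K}(p,B)]$ is in fact slightly cleaner than the paper's ``unlucky pair'' bookkeeping (which glosses over the residual sampling error on non-unlucky pairs), but this is a constant-factor refinement of the same proof, not a different route.
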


\begin{proof}
Given a new example $(p, B)$, the algorithm samples a value vector $\hat{v}$ uniformly at random from the convex body $\hat{K}$, and returns an optimal bundle with respect to $\hat{v}$, $p$, and $B$.

Consider a price vector $p$ and budget $B$.  For some value vectors in $\hat{K}$, the returned bundle is suboptimal (not an $\epsilon$-additive approximation). We call this subset the set of suboptimal value vectors with respect to $(p,B)$, and the fraction of suboptimal value vectors in $\hat{K}$ is the probability that our algorithm does not return a good bundle, i.e. the error probability of our algorithm.
We say a pair $(p,B)$ is unlucky if for more than $\delta$ fraction of value vectors in $\hat{K}$, the returned bundle is suboptimal.
We prove that with probability at least $1-\delta/2$, the convex body $\hat{K}$ we return, has this property that with at most probability $\delta/2$, the pair $(p,B)$ drawn from $\mathcal{D}$ is unlucky. This way with probability at most $\delta/2+\delta/2=\delta$, the pair $(p,B)$ is unlucky which proves that our algorithm is  $(\epsilon, \delta)$-learner.

We prove the claim by contradiction.
Define $A$ to be the event that "with probability more than $\delta/2$, the pair $(p,B) \sim \mathcal{D}$ is unlucky".
 We prove that the probability of event $A$ is at most $\delta/2$.
Let $K_i$ be the convex body at the beginning of iteration $i$, and $K'_i$ be the more restricted version of $K_i$ that we compute at the end of iteration $i$. Event $A$ holds if for some $i$ we have these two properties: a) the probability that a pair $(p,B)$ drawn i.i.d. from $\mathcal{D}$ is unlucky with respect to $K_i$ is more than $\delta/2$, i.e. if we sample the value vector from $K_i$, the returned bundle for $(p,B)$ is suboptimal with probability more than $\delta$. b) the volume of $K'_i$ is not less than $1-\delta$ times volume of $K_i$.

We bound the probability of having both of these properties at iteration $i$. In this iteration, for every example we take, with probability more than $\delta/2$, the pair $(p,B)$ is unlucky. For an unlucky pair $(p,B)$,  with probability more than $\delta$, we return a suboptimal example, and then we get feedback from the agent. Using lemma~\ref{lemma:suboptimalvalues}, and the feedback we get from the agent, all of the suboptimal value vectors for pair $(p,B)$ will be removed from $K_i$ and will not exist in $K'_i$ (by the new constraints we add in this loop). Since $(p,B)$ is unlucky, more than  $\delta$ fraction of the $K_i$ will be deleted in this case. In other words, for each example in loop $i$ with probability at least $\delta^2/2$, more than $\delta$ fraction of $K_i$ will be removed. Clearly, since $K'_i$ has volume at least $1-\delta$ fraction of $K_i$, this has not happened for any of the examples of loop $i$. Since we have $\frac{C\log(n)\log(1/\delta)}{\delta^2}$ examples in each loop, the probability of holding both these properties at loop $i$ is at most $(1-\delta^2)^{\frac{C\log(n)\log(1/\delta)}{\delta^2}} < \delta /(2n^C)$ for $\delta \leq 1/2$. Since there are less than $n^C$ number of loops for some large enough constant $C$, the probability of event $A$ (which might happen in any of the loops) is less than $\delta/2$.
\end{proof}

\section{Discussion}
In this paper we have considered the problem of efficiently learning predictive classifiers from revealed preferences. We feel that the revealed preferences problem is much more meaningful when the observed data must be rationalized with a \emph{predictive} hypothesis, and of course much remains to be done in this study. Our work leaves many open questions:
\begin{enumerate}
\item What are tight bounds on the sample complexity for $\delta$-learning linear valuation functions? There is a simple $\Omega(n)$ lower bound, and here we give an algorithm with sample complexity $\tilde{O}(n^2/\delta)$, but where does the truth lie?
\item Is there a general measure of sample complexity, akin to VC-dimension in the classical learning setting, that can be fruitfully applied to the revealed preferences problem? Beigman and Vohra \cite{BV06} adapt the notion of fat-shattering dimension to this setting, but applied to the revealed preferences problem, fat shattering dimension is cumbersome and seems ill-suited to proving tight polynomial bounds.
\end{enumerate}

\bibliographystyle{alpha}
\bibliography{revealedprefs}

\appendix
\section{Omitted Proofs}\label{Appendix:OmittedProofs}

\begin{proofof}{Lemma \ref{Lemma:OptimalBundleAdditivelySeparable}}
For each possible pair of items $i\neq j \in [n]$, we consider three cases:
\begin{itemize}
\item $0 < x^*_i , x^*_j < 1$: In this case, $\frac{v'_i(x^*_i)}{p_i} = \frac{v'_j(x^*_j)}{p_j}$. Otherwise (if for example the expression corresponding to item $i$ is greater), we could buy $\epsilon'/p_i$ additional units of $i$, and buy $\epsilon'/p_j$ fewer units of $j$ without violating the budget constraint. When $\epsilon' \to 0$, this exchange will be beneficial for the agent which would contradict optimality.
\item $x^*_i = 1$ and $x^*_j<1$: In this case, $\frac{v'_i(x^*_i)}{p_i} \geq \frac{v'_j(x^*_j)}{p_j}$ otherwise the agent could buy fewer units of $i$ and additional units of $j$ and thereby increase the value of the bundle, contradicting optimality.
\item $x^*_i >0$ and $x^*_j =0$: Identically to above: $\frac{v'_i(x^*_i)}{p_i} \geq \frac{v'_j(x^*_j)}{p_j}$.
\end{itemize}

To complete the proof we need now only to select $\tau$. If there exists some $1 \leq  i \leq n$ such that $0 < x^*_i < 1$, setting $\tau = \frac{v'_i(x^*_i)}{p_i}$ proves the claim. Otherwise, any value $ \tau \in [\max_{i|x^*_i=0} \frac{v'_i(x^*_i)}{p_i}, \min_{i|x^*_i=1} \frac{v'_i(x^*_i)}{p_i}]$ completes the proof.
\end{proofof}

\begin{proofof}{Lemma \ref{Lemma:FindThresholds}}
First let us assume that there is a sequence of thresholds with the desired properties. In this case, we may find it as follows. Suppose item $i$ has the maximum value of $\frac{V(i,l_i+1)}{p_i}$ among all items: i.e. $\frac{V(i,l_i+1)}{p_i} \geq \frac{V(j,l_j+1)}{p_j}$ for any $j \neq i$.  We assume that this item $i$ and threshold $l_i$ are given, because we can guess their values as there are $n(k+2)$ possible choices for them. For any item   $j \neq i$, we select some $l_j$ such that it can be inferred from our upper and lower bounds that $\frac{V(j,l_j)}{p_j} \geq \frac{V(i,l_i+1)}{p_i}$, but it can not be inferred that $\frac{V(j,l_j+1)}{p_j} \geq \frac{V(i,l_i+1)}{p_i}$.

Since we have $V(j,-1)=\infty$ and $V(j,k+1)=0$, we can always find some value for $l_j$. In fact for each item $j$, we can find two thresholds $0 \leq t_1(j) \leq t_2(j) \leq k$ such that a) we can infer that $\frac{V(j,t_1(j))}{p_j} \geq \frac{V(i,l_i+1)}{p_i}$, and b) we can also infer that $\frac{V(j,t')}{p_j} = \frac{V(i,l_i+1)}{p_i}$ for any $t_1(j) < t' \leq t_2(j)$, and finally, c)  we can not infer that $\frac{V(j,t_2(j)+1)}{p_j} \geq \frac{V(i,l_i+1)}{p_i}$. Variable $l_j$ could be any integer in range $[t_1,t_2]$. We might sometimes have that $t_1(j)=t_2(j)$ which means that $l_j$ is uniquely defined.


Assuming object $i$ has the maximum value of $\frac{V(i,l_i+1)}{p_i}$, we know that any solution $l_j \in [t_1(j),t_2(j)]$ (for all $j \neq i$)
satisfies the first property we are looking for. The second property is a budget constraint: we should be able to buy $\max\{l_j,0\}/k$ units of each item $j$, and the total cost of buying $\min\{l_j+1,k\}/k$ of each item  $j$ should be at least $B$.

We start with thresholds $l_j=t_1(j)$. If these are not feasible (i.e. if the resulting bundle costs more than $B$), there does not exist such a sequence of thresholds with object $i$ as the object with maximum $\frac{V(i,l_i+1)}{p_i}$ and $l_i$ as the threshold of object $i$. Alternately, if these thresholds are feasible,  we increase the thresholds one at a time while the cost of the resulting optimal bundle remains below $B$. We have the freedom to increase threshold $l_j$ in the range $[t_1(j),t_2(j)]$, and we can increase it one unit at a time to a maximum of $t_2(j)$. We stop when it is not possible to increase any of the thresholds any more. This process results in a set of thresholds $l_j \in [t_1(j),t_2(j)]$, and it is not possible to increase any of them.

 If for some $j$, $l_j$ is strictly less than $t_2(j)$, we can infer that budget $B$ is not enough to buy $\max\{l_{j'},0\}/k$ units of each object $j' \neq j$, and $(l_{j}+1)/k=Min\{l_j+1,k\}/k$ units of object $j$  (note that $l_j+1 \leq t_2(j) \leq k$) ( Otherwise we could have increased the threshold $l_j$ by at least one). Consequently for this sequence of thresholds, there is not enough budget to buy $\min\{l_{j''}+1,0\}/k$ units of item $j''$ for all $1 \leq j'' \leq n$. Therefore, this sequence satisfies both properties we wanted.

In the remaining case, we stop at $l_{j} = t_2(j)$ for all $j \neq i$. In this case, if cost of buying $\min\{l_{j'}+1,k\}/k$ units of all items $1 \leq j' \leq n$ is at least $B$, this sequence of thresholds $l_1, l_2, \cdots, l_n$ satisfies both properties that we want. Otherwise, we must try another guess for object $i$ and threshold $l_i$ to start again. We try all $n(k+2)$ possible guesses exhaustively for pair $(i,l_i)$, and if in one of them we succeed to find a sequence of valid thresholds, we are done, otherwise there does not exist such sequence, and our algorithm simply returns a random bundle. (The probability that this occurs will be folded into the error probability of our algorithm).
\end{proofof}

\begin{proofof}{Lemma \ref{lemma:thresholdswhp}}
Similar to Lemma~\ref{lemma:allpairsratios},  we define $a_{i,r,j,s}$ and $b_{i,r,j,s}$ where $i$ and $j$ are two objects, and  $0 \leq r,s \leq k$ as follows:

\begin{eqnarray*}
a_{i,r,j,s} &=& \min \left\{ a | Pr\left( r/k \leq x^*_i < (r+1)/k ~~ \& ~~ s/k \leq x^*_j < (s+1)/k ~~ \& ~~ a \leq \frac{p_i}{p_j} \leq \frac{v'_i(x^*_i)}{v'_j(x^*_j)}] \right) \leq \frac{\delta}{(n(k+2))^2} \right\} \\
b_{i,r,j,s} &=& \max \left\{ b | Pr\left( r/k \leq x^*_i < (r+1)/k ~~ \& ~~ s/k \leq x^*_j < (s+1)/k ~~ \& ~~ \frac{v'_i(x^*_i)}{v'_j(x^*_j)} \leq \frac{p_i}{p_j} \leq b] \right) \leq \frac{\delta}{(n(k+2))^2} \right\} \\
\end{eqnarray*}

Every time we see an example, with probability $\frac{\delta}{(n(k+2))^2}$, we update the lower bound $L(i,r,j,s+1)$ to some thing equal to or greater than $a_{i,r,j,s}$. A similar claim holds for the upper bounds and values of $b_{j,s,i,r}$. Similar to the proof of Lemma~\ref{lemma:allpairsratios}, one can show that with probability at most $(1-\delta/(n(k+2))^2)^m \leq \delta/(n(k+2))^2$ the lower bound $L(i,r,j,s+1)$ is less than $a_{i,r,j,s}$ after observing all $m$ examples. Using the union bound this does not happen for any pairs of $(i,r)$ and $(j,s)$ with probability at least $1-\delta$. So with high probability, we have very accurate bounds on the ratios of different first derivatives at points $0, 1/k, 2/k, \cdots, 1$.

In the classification phase of the algorithm, consider a new example is drawn from $\mathcal{D}$. We prove that with probability $1-\delta$, we can find the thresholds $\{l_i\}_{i=1}^n$, if the lower bound $L_{i,r,j,s+1}$ is  at least $a_{i,r,j,s}$ for different choices of $i,j,r$, and $s$. So we assume these inequalities hold.  Define $l_i$ to  be $\lfloor kx^*_i \rfloor$ for $1 \leq i \leq n$ with $x^*_i > 0$, if $x^*_i$ is zero, we define $l_i$ to be $-1$. We know that  $L_{i,l_i,j,l_j+1}$ is at least $a_{i,l_i,j,l_j}$, so we can infer that $\frac{V(i,l_i)}{p_i} \geq \frac{V(j,l_j+1)}{p_j}$ unless $\frac{p_i}{p_j} \in [a_{i,r,j,s}, \frac{v'_i(x^*_i)}{v'_j(x^*_j)}]$ which occurs with probability at most $\delta/(n(k+2))^2$. Taking a union bound again, and considering all choices of $i,j,r,s$, we find that this does not happen for any $4$-tuple $(i,j,r,s)$ except with probability at most $\delta$.

Therefore we have shown that the thresholds
$\{l_i\}_{i=1}^n$ exist and can be inferred based on our upper and lower bounds with probability $1-2\delta$.  Thus our lower bounds are enough to imply the the first property of the thresholds $\{l_i\}_{i=1}^n$. We also note that the second property is satisfied because of the choices of $l_i$. Clearly $\max\{l_i,0\}/k$ is at most $x^*_i$, and therefore the total cost of buying  $\max\{l_i,0\}/k$ of each object $i$ is at most the cost of optimum bundle,  which is $B$. We also know that $\min\{l_i+1,k\}/k$ is at least $x^*_i$ which gives us the remaining inequality needed for the second property of the thresholds.
\end{proofof}

\begin{proofof}{Theorem \ref{Theorem:AdditivelySeparable}}
Using Lemma~\ref{lemma:thresholdswhp},  we know that for any example $(p,B)$, our algorithm finds thresholds $\{l_i\}_{i=1}^n$ with probability $1-2\delta$. We  prove that our bundle has one of the two properties in the statement of this theorem in these cases (when our algorithm finds appropriate thresholds). There are two cases:
\begin{itemize}
\item
There exists some item $i$ such that $x^*_i \geq (l_i+1)/k$. We have that $\frac{V(j,l_j)}{p_j} \geq \frac{V(i,l_i+1)}{p_i} \geq \frac{v'_i(x^*_i)}{p_i}$ for any $j \neq i$. Therefore $x^*_j \geq l_j/k$ for each $1 \leq j \leq n$.

Therefore the proposed bundle is completely consistent with the optimum solution up to $l_{i'}/k$ fraction for each item $1 \leq i' \leq n$. We spend the rest of our budget to buy equal fractions of objects with $0 \leq l_{i''} < k$, but the optimum algorithm might do something else. Based on the second property of thresholds, the remaining budget is not enough to buy more than $1/k$ fraction of these objects, so in the second step we buy some fraction $\rho \leq 1/k$ of all objects with $0 \leq l_{i''} < k$ to spend our budget completely.

To compare the performance of the optimum algorithm and our algorithm on the remaining budget, we will focus on some small part of the remaining budget. With very small budget $\epsilon'>0$, we might buy some fraction of object $j$ (with $0 \leq l_{j} < k$) to increase its quantity by $\epsilon'/p_j$. We know that since the value function for object $j$, $v_j$, is concave with bounded second derivative, our increase in value is at least $(v'_j(l_j/k)-Q/k)\epsilon'/p_j$ where $Q$ is an upper bound on all values of second derivatives of value functions. Because the fraction of object $j$, when we are increasing it, lies in the range $[l_j/k,(l_j+1)/k]$, and clearly has difference at most $1/k$ from fraction $l_j/k$.
Therefore the first derivative of the value function of object $j$ can not be less than $v'_j(l_j/k)-Q/k$, when we are increasing it.

 On the other hand, the optimum solution might use this $\epsilon'$ budget to buy $\epsilon'/p_{j'}$ fraction of object $j'$. Since the fraction object $j'$ is in range $[l_{j'},1]$ when the optimum is buying from $j'$, the increase in the value can not be more than $\big{(}v'_{j'}((l_{j'}+1)/k)+Q/k\big{)} \epsilon'/p_{j'}$. This holds because the fraction of object $j'$   is at most $1/k$ less than $(l_{j'}+1)/k$. This means that the optimum solution is gaining at most $\big{(}v'_{j'}((l_{j'}+1)/k)+Q/k\big{)}/p_{j'} - (v'_j(l_j/k)-Q/k)/p_j$ more value per each unit of budget in comparison to our algorithm. Since we have that $\frac{V(j,l_j)}{p_j} \geq \frac{V(j',l_{j'}+1)}{p_{j'}}$, this term (the difference in values per unit of budget) is at most $Q(1/p_j+1/p_{j'})/k$. In order to make the total difference in the values of the two bundles at most $\epsilon$, one just needs to set
$k \geq \big{[} (2Q/\epsilon) \cdot \max_{p,B \sim \mathcal{D}, 1 \leq j \leq n} \frac{B}{p_j} \big{]}$.

\item In the second case, for all $1 \leq i \leq n$, $x^*_i < (l_{i}+1)/k$, and our fraction for this object is at least $l_i/k$, for $k > 1/\epsilon$, the first property in the statement of this theorem holds.
\end{itemize}
\end{proofof}

\begin{proofof}{Lemma \ref{lemma:samplingExamples}}
By construction, each time we update the convex body $K$, we reduce its volume by a factor of $1-\delta$ using $\frac{C\log(n)\log(1/\delta)}{\delta^2}$ examples. So it suffices to show that we will not do these updates more than $O([n(\log(n)+\log(M))\log(1/\epsilon)]/\delta)$ times. We note that the volume of $K$ is $1$ at the beginning. We prove a lower bound on the volume of the final convex body $\hat{K}$ by showing that some points will not be deleted in any of the iterations. We say a value vector $v'$ is close to the actual value vector $v$, if for any $1 \leq i \leq n$, we have that  $|v_i-v'_i| \leq \epsilon'$. We claim that a vector $v' \in [0,1]^n$ which is close to $v$ will not be removed in any of the loops. We prove by contradiction.

Suppose $v'$ has been removed by adding some constraint on pair of objects $(i,j)$ with price vector $p$. We should have that $\frac{v'_i}{p_i} \leq \frac{v'_j}{p_j}$. Since we added this constraint, we also should have that $\frac{v_i- \epsilon'}{p_i} > \frac{v_j + \epsilon'}{p_j}$. But this is a contradiction since $v'_i \geq v_i- \epsilon'$ and $v'_j \leq v_j + \epsilon'$. So we never remove points from the set $(v+[-\epsilon',\epsilon]^n) \cap [0,1]^n$.

We note that for each $1 \leq i \leq n$, the length of interval $[v_i-\epsilon',v_i+\epsilon'] \cap [0,1]$ is at least $\epsilon'$, so the volume of the set of points $(v+[-\epsilon',\epsilon]^n) \cap [0,1]^n$ is at least $(\epsilon')^n$ which is a lower bound on the volume $\hat{K}$. Therefore the number of iterations can not be more than $\log_{1-\delta} \big{(} \frac{(\epsilon')^n}{1} \big{)} = \frac{\ln((\epsilon')^n)}{\ln(1-\delta)} \leq \frac{n\ln(1/\epsilon')}{\delta}$. By definition of $\epsilon'$, the number of loops is $O(\frac{n(\log(n)+\log(M))log(1/\epsilon)}{\delta})$. So the total number of examples we use to learn $\hat{K}$  is $O(\frac{Cn\log(n)(\log(n)+\log(M))log(1/\epsilon)\log(1/\delta)}{\delta^3})$.
\end{proofof}
\end{document}